\documentclass[pdflatex,sn-basic,iicol,Numbered]{sn-jnl}

\usepackage{graphicx}%
\usepackage{amsmath,amssymb,amsfonts}%
\usepackage{amsthm, bbm}%
\usepackage{mathrsfs}%
\usepackage[title]{appendix}%
\usepackage{xcolor}%
\usepackage{textcomp}%
\usepackage{manyfoot}%
\usepackage{booktabs}%
\usepackage{algorithm}%
\usepackage{algorithmicx}%
\usepackage{algpseudocode}%
\usepackage{listings}%
\usepackage{boxedminipage}
\usepackage{multirow}%
\usepackage{makecell}
\usepackage{booktabs} 
\usepackage{array}    

\theoremstyle{thmstyleone}%
\newtheorem{theorem}{Theorem}
\theoremstyle{thmstyletwo}%
\newtheorem{remark}{Remark}%
\newtheorem{lemma}{Lemma}%

\theoremstyle{thmstylethree}%
\newtheorem{definition}{Definition}%
\newtheorem{corollary}{Corollary}

\begin{document}


\title[Efficient Error-Detecting PIR]{Efficient DPF-based Error-Detecting Information-Theoretic Private Information Retrieval Over Rings\textsuperscript{1}}


\author[1]{\fnm{Pengzhen} \sur{Ke}}\email{kepzh@shanghaitech.edu.cn}
\author*[1,3]{\fnm{Liang Feng} \sur{Zhang}}\email{zhanglf@shanghaitech.edu.cn}
\author[2]{\fnm{Huaxiong} \sur{Wang}}\email{hxwang@ntu.edu.sg}
\author[3]{\fnm{Li-Ping} \sur{Wang}}\email{wangliping@iie.ac.cn}












\abstract{
Authenticated private information retrieval (APIR) is the state-of-the-art error-detecting private information retrieval (ED-PIR), using Distributed Point Functions (DPFs) for subpolynomial complexity and privacy. However, its finite field structure restricts it to prime-order DPFs, leading to prohibitively large key sizes under information-theoretic settings, while its dual-DPF-key design introduces unnecessary communication overhead, limiting its practicality for large-scale deployments.

This paper proposes a novel ring-based information-theoretic ED-PIR (itED-PIR) scheme that overcomes these limitations by leveraging prime-power-order information-theoretic DPFs (itDPFs). Built over a prime-power ring, the proposed scheme breaks APIR’s field-induced constraint to enable more efficient DPF utilization, significantly reducing key size growth and rendering the scheme feasible for high-security scenarios. Additionally, a single-itDPF-key design halves query-side communication overhead by eliminating APIR’s redundant dual-key setup, without compromising privacy or verifiability. 

Beyond immediate efficiency gains, this work establishes a lightweight, flexible framework for constructing DPF-based malicious-resilient private information retrieval, opening new avenues for privacy-preserving data retrieval in distributed storage systems and post-quantum privacy protocols.
}

\keywords{Private Information Retrieval, Distributed Point Function, Malicious Servers, Verifiable Computation}



\maketitle
\footnotetext[1]{This work has been accepted for publication in Cybersecurity and is currently in press.}

\section{Introduction}
\label{sec:intro}
Private information retrieval (PIR) enables a client to retrieve a target database block \( x_{\alpha} \in {\bf x} = (x_1, \ldots, x_n) \) while keeping the index \( \alpha \in [n] \) private from servers. For large-scale data, \cite{chor1998private} proves that a trivial PIR scheme with \( O(n) \) communication complexity is impractical. 
Thus, multi-server information-theoretic PIR \citep{beimel2001information, yekhanin2008towards, efremenko20093} has become the dominant direction. In this model, the database is replicated across servers, and \( t \)-privacy guarantees no coalition of up to \( t \) servers infers the retrieval index \( \alpha \).

A critical challenge in multi-server PIR is addressing malicious servers: these may return incorrect responses (due to attacks, network failures, or outdated data), leading to erroneous retrieval results. Among existing solutions for error detection, Authenticated PIR (APIR) \citep{colombo2023authenticated} is a state-of-the-art Error-Detecting PIR (ED-PIR) scheme that leverages Distributed Point Functions (DPFs) to achieve subpolynomial communication complexity, providing a foundational solution for privacy-preserving retrieval with error detection. However, APIR suffers from following limitations. 

First, APIR relies on a dual-DPF-key design (two DPF keys per server) to implement its verification logic. This redundant design introduces unnecessary communication overhead.

Second, APIR is strictly constrained to finite field constructions, limiting its compatibility to DPFs with prime-order output groups \( \mathbb{G} = (\mathbb{Z}_p, +) \). This constraint excludes the use of more efficient prime-power-order output group DPFs (over \( \mathbb{Z}_{p^\tau} \) for \( \tau \geq 2 \)) and leads to inefficiency. 

These limitations motivate the development of a new ED-PIR scheme that retains the subpolynomial communication efficiency of APIR’s DPF-based framework, while eliminating its key redundancy and field-related constraints. To achieve these objectives, this paper presents a ring-based information-theoretic ED-PIR scheme instantiated from prime-power-order output group DPFs, which delivers significant efficiency improvements over APIR by addressing both of its foundational limitations.

\subsection{Related Work} \label{sec:related_work}
{\bf Information-Theoretic PIR.} 
Multi-server information-theoretic PIR (itPIR) was first formalized by \cite{chor1998private}, with subsequent works \cite{beimel2001information, yekhanin2008towards, efremenko20093} optimizing its communication complexity to sublinear or even polylogarithmic levels. These schemes achieve unconditional $t$-privacy but assume honest-but-curious servers, lacking resilience to malicious responses.

\vspace{2mm}
\noindent
{\bf Malicious-Resilient PIR.}
To mitigate the risks posed by malicious responses, malicious-resilient PIR schemes \citep{beimel2002robust, yang2002private, goldberg2007improving, devet2012optimally, zhang2014verifiable, sun2017capacity, sun2017capacity2, banawan2018capacity, kurosawa2019correct, zhao2021verifiable, ben2022verifiable, ke2022two, zhang2022byzantine, zhu2022post, colombo2023authenticated, ke2023private} introduce verification guarantees that allow the client to detect or even correct incorrect responses. 
Among these schemes, ED-PIR \citep{ke2022two, ke2023private, eriguchi2022optimal} achieves information-theoretic privacy and is $(\ell - 1,\epsilon)$-verifiable for $\ell$ servers with polylogathic communication complexity. In contrast, APIR \citep{colombo2023authenticated} also achieves information-theoretic privacy but is $(1, \epsilon)$-verifiable, and has subpolynomial communication complexity.

\vspace{2mm}
\noindent
{\bf Information-Theoretic DPF.}
Information-Theoretic DPF (itDPF) was first introduced by \cite{boyle2023information}, which presents a statistically private scheme for 3 servers (with \( t=1 \)) and output additive group \( \mathbb{Z}_p \) (for \( p \geq 2 \)), as well as a perfectly private scheme for 4 servers and output additive group \( \mathbb{Z}_{p^\tau} \) (for \( p \geq 3 \)). The work by \cite{li2025efficient} completes the perfectly private scheme for 4 servers with output additive group \( \mathbb{Z}_{p^\tau} \) (for \( p=2 \)); additionally, it further proposes a \( t \)-private scheme for any \( t \geq 1 \), along with a scheme where the key size is subpolynomially related to the size of the output field.

\begin{table*}[tbp]
    \centering
    \setlength{\tabcolsep}{8pt} 
    \renewcommand{\arraystretch}{1.4} 
    \begin{tabular}{@{}ccccccc@{}} 
        \toprule
        \textbf{Scheme} & \textbf{$\ell$} & \textbf{$t$} & \makecell{\textbf{Algebraic}\\ \textbf{Structure}} & \textbf{Privacy} & \textbf{Verifiable} &     \makecell{\textbf{Communication}\\ \textbf{Complexity}} \\
        \midrule
        \multirow{2}{*}{Corollary \ref{coro:1}} & 3 & 1 & $\mathbb{R}_p$ & Statistical & $\bigl(1, \frac{1}{p-1}\bigr)$ & $O\bigl(\lambda \log p \cdot 2^{c_1(p)s(n)}\bigr)$ \\
         & 4 & 1 & $\mathbb{R}_p$ & Statistical & $\bigl(1, \frac{1}{p-1}\bigr)$ & $O\bigl(\lambda \cdot 2^{10s(n)} + \lambda \log p\bigr)$ \\
        Corollary \ref{coro:2} & 4 & 1 & $\mathbb{R}_{p^{\tau}}$ & Perfect & $\bigl(1, \frac{1}{p^\tau-p^{\tau-1}}\bigr)$ & $O\bigl(\tau \log p \cdot 2^{c_2(p)s(n)}\bigr)$ \\
        Corollary \ref{coro:3} & 8 & 1 & $\mathbb{R}_p$ & Perfect & $\bigl(1, \frac{1}{p-1}\bigr)$ & $O\bigl(2^{10s(n)} + \log p\bigr)$ \\
        Corollary \ref{coro:4} & $d(t+1)$ & $t$ & $\mathbb{R}_p$ & Perfect & $\bigl(t, \frac{1}{p-1}\bigr)$ & $O\bigl(\log p \cdot n^{1/\lfloor (2d+1)/t \rfloor}\bigr)$ \\
        \midrule
        \multirow{2}{*}{$\text{APIR}^{\dagger}$} & 3 & 1 & $\mathbb{F}_p$ & Statistical & $\bigl(1, \frac{1}{p-1}\bigr)$ & $O\bigl(\lambda \log p \cdot 2^{c_1(p)s(n)}\bigr)$  \\
                                  &  4 & 1 & $\mathbb{F}_{p}$ & Perfect & $\bigl(1, \frac{1}{p-1}\bigr)$ & $O\bigl(\log p \cdot 2^{2p\cdot s(n)}\bigr)$ \\
        \bottomrule
    \end{tabular}
    \vspace{0.5em}
    \caption{Concrete instantiations information-theoretic ED-PIR Schemes ($\ell$: the number of servers; $t$: the number of colluding servers; $p$: a prime; $d,\tau$: positive integers; 
    $\mathbb{R}_p, \mathbb{R}_{p^{\tau}}$: the ring $\mathbb{R}_p = (\mathbb{Z}_p,+,\cdot), \mathbb{R}_{p^{\tau}} = (\mathbb{Z}_{p^{\tau}},+,\cdot)$;
    $c_1, c_2$: the function such that $c_1(2)=6, c(3)=10, c(p)=2p$ for $p\geq 5$), $c_2(2)=6, c_2(p)=2p$ for $p\geq 3$;  $s(n)$: shorthand for $\sqrt{\log n \log \log n}$. 
    }
    \parbox{\linewidth}{\footnotesize\raggedright $\dagger$: APIR \citep{colombo2023authenticated}}
    \label{tab:itedpir_concrete}
\end{table*}

\subsection{Contributions}
This paper addresses the key inefficiencies of the state-of-the-art APIR scheme in \cite{colombo2023authenticated} under the information-theoretic setting by proposing a novel ring-based information-theoretic ED-PIR scheme, with concrete diverse instantiations presented in Table \ref{tab:itedpir_concrete}.

As the first contribution, the information-theoretic ED-PIR scheme is constructed over the prime-power ring $\mathbb{R}=(\mathbb{Z}_{p^\tau},+,\cdot)$, breaking itAPIR’s finite field constraint that limits support to prime-order information-theoretic DPFs. This enables the design of a novel $4$-server perfectly $1$-private information-theoretic ED-PIR scheme with subpolynomial communication complexity (Corollary \ref{coro:2} in Table \ref{tab:itedpir_concrete}). Additionally, a single-DPF-key query framework is designed for ED-PIR, eliminating the dual-key redundancy of APIR and halving query-side communication overhead without any tradeoffs in privacy or verifiability.

As the second contribution, a diverse set of concrete information-theoretic ED-PIR instantiations are provided (Table \ref{tab:itedpir_concrete}), covering $3$, $4$, $8$, and general $d(t+1)$ server architectures, both statistical and perfect privacy, and collusion tolerance for $t=1$ to arbitrary $t\geq1$. 
This offers solutions for distributed storage systems with varying requirements, while maintaining subpolynomial communication complexity.

\subsection{Technique Overview}
\label{sec:our_approach}
This work proposes a novel information-theoretic ED-PIR scheme constructed over the ring $\mathbb{R}=(\mathbb{Z}_{p^\tau},+,\cdot)$, which leverages a single DPF key per server to reduce query-side communication overhead.

Error detection is realized via a ring-based verification check. Taking a binary database as an example, suppose the client intends to retrieve the element \( x_\alpha \in \{0,1\} \) from the database \( {\bf x} = (x_1, \dots, x_n) \). The client randomly selects an invertible element \( \beta \in \mathbb{R}^\ast \) (the unit group of \( \mathbb{R} \)), then sends the DPF key corresponding to the point function \( f_{\alpha, \beta} \) to each server, and receives responses from the servers. In the absence of interference from malicious servers, the sum of these responses should be \( {\sf res} = \sum_{i \in [n]} f_{\alpha, \beta}(i) \cdot x_i \), which equals \( x_\alpha \in \{0,1\} \). If the result does not belong to \( \{0,1\} \), the verification check identifies it as erroneous. The privacy guarantee of the DPF ensures that servers cannot obtain any information about \( \alpha \) or \( \beta \), thus protecting the privacy of \( \alpha \) while ensuring that the probability of malicious servers tampering with the result such that the modified \( \hat{\sf res} \in \{0,1\} \) is extremely small. This novel error detection technique enables sending only one DPF key to each server during the query phase.

The ring structure guarantees that for any DPF with a given output group, the scheme can always adopt the ring corresponding to that group as the data space. Thus, the scheme imposes no restrictions on the underlying DPF and is compatible with DPFs of any output group.

The formal construction, rigorous security analysis, and concrete instantiations of the itED-PIR scheme with explicit communication complexity bounds are presented in the subsequent sections.

\section{Preliminaries} \label{preliminaries}
    {\bf Notations.  }
    $\mathbb{N}$ denotes the set of natural numbers $\{1,2,3,\ldots\}$.
    For any function $G: \mathbb{G} \to \mathbb{H}$, let $G(\mathbb{G})$ denote the image of $\mathbb{G}$ under $G$, formally defined as the set
    $
        G(\mathbb{G}) \triangleq \{ G(s) \mid s \in \mathbb{G} \}.
    $        
    For any group $\mathbb{G}$, let $|\mathbb{G}|$ denote the order of the group.
    For any ring $\mathbb{R}$, $\mathbb{R}^*$ denotes the multiplicative group consisting of all non-zero invertible elements in $\mathbb{R}$, i.e., $\mathbb{R}^* \triangleq \{ r \in \mathbb{R} \mid r \text{ is invertible} \}$.
    \emph{Unless otherwise specified, all groups considered in this paper are abelian.}
    \vspace{2mm}
    
    \noindent
    {\bf Point Function. } Given a domain size $n$ and an group $\mathbb{G}$, the {\em point function} $f_{\alpha, \beta}: [n] \to \mathbb{G} (\alpha \in [n], \beta \in \mathbb{G})$ is a function such that for any $i\in [n]$,
    \begin{equation}        
        \label{eq:point function}
        f_{\alpha,\beta}(i) = \begin{cases} 
            \beta, & \text{if } i = \alpha \\
            0, & \text{if } i \neq \alpha
        \end{cases}.
    \end{equation}

    \subsection{Information-Theoretic Distributed Point Function} 
    \label{sec:DPF}    
    Informally, an $\ell$-server information-theoretic Distributed Point Function (itDPF) scheme involves $\ell$ servers $\{\mathcal{S}_j\}_{j \in [\ell]}$ and a client $\mathcal{C}$. The scheme enables the client to evaluate a point function $f_{\alpha, \beta}(i)$ across the $\ell$ servers for any $i, \alpha \in [n]$, an output group $\mathbb{G}$ and $\beta \in \mathbb{G}$, while preserving the privacy of $\alpha$ and $\beta$.
    
    \begin{definition}[{\bf itDPF} \citep{gilboa2014distributed,boyle2016function, boyle2023information, li2025efficient}]
        An ($t$-private) $\ell$-server itDPF scheme $\Pi$ with domain size $n$ and output group $\mathbb{G}$ is a tuple of $\ell + 1$ algorithms $\Pi = (\mathsf{Gen}, \{ \mathsf{Eval}_j \}_{j \in [\ell]})$, defined as follows:
        \begin{itemize}
            \item $\{k_j\}_{j \in [\ell]} \leftarrow \mathsf{Gen}(1^\lambda, f_{\alpha, \beta})$: A randomized key generation algorithm run by the client. It takes as input the security parameter $\lambda$ and the point function $f_{\alpha, \beta}$ (with $\alpha \in [n]$ and $\beta \in \mathbb{G}$), and outputs $\ell$ keys $\{k_j\}_{j \in [\ell]}$. Each key $k_j$ is sent to server $\mathcal{S}_j$.
        
            \item $y_j \leftarrow \mathsf{Eval}_j(k_j, i)$: A deterministic evaluation algorithm run by server $\mathcal{S}_j$. It takes as input the key $k_j$ and index $i$, and outputs a group element $y_j \in \mathbb{G}$.
        \end{itemize}
    \end{definition}

    \noindent

\begin{table*}[tbp]
    \centering
    \setlength{\tabcolsep}{6pt} 
    \renewcommand{\arraystretch}{1.4} 
    \begin{tabular}{@{}lccccc@{}} 
        \toprule
        \textbf{Scheme} & $\boldsymbol{\ell}$ & $\boldsymbol{t}$ & \textbf{Key Size} & $\boldsymbol{\mathbb{G}}$ & \textbf{Privacy} \\
        \midrule
        \makecell{\cite{boyle2023information},\\ Theorem 9} & 3 & 1 & 
        \makecell{$O\bigl(\lambda \log p \cdot 2^{c(p)\sqrt{\log n \log \log n}}\bigr)$,\\ $c(p) = \begin{cases} 6, & p=2; \\ 10, & p=3; \\ 2p, & p \geq 5. \end{cases}$} 
        & $\mathbb{Z}_p \ (p \geq 2)$ & Statistical \\
        \midrule 
        \makecell{\cite{li2025efficient},\\ Theorem 11} & 4 & 1 & $O\bigl(\lambda \cdot 2^{10\sqrt{\log n \log \log n}} + \lambda \log p\bigr)$ & $\mathbb{Z}_p \ (p \geq 2)$ & Statistical \\
        \midrule
        \makecell{\cite{boyle2023information},\\ Theorem 5} & 4 & 1 & $O\bigl(\tau \log p \cdot 2^{2p\sqrt{\log n \log \log n}}\bigr)$ & $\mathbb{Z}_{p^\tau} \ (p \geq 3)$ & Perfect \\
        \midrule
        \makecell{\cite{li2025efficient},\\ Theorem 6} & 4 & 1 & $O\bigl(\tau \log p \cdot 2^{6\sqrt{\log n \log \log n}}\bigr)$ & $\mathbb{Z}_{2^\tau}$ & Perfect \\
        \midrule
        \makecell{\cite{li2025efficient},\\ Theorem 8} & 8 & 1 & $O\bigl(2^{10\sqrt{\log n \log \log n}} + \log p\bigr)$ & $\mathbb{Z}_p \ (p \geq 2)$ & Perfect \\
        \midrule
        \makecell{\cite{li2025efficient},\\ Theorem 7} & \makecell{$d(t+1),$\\ $d \geq 1$} & $\geq 1$ & $O\bigl(\log p \cdot n^{1/\lfloor (2d+1)/t \rfloor}\bigr)$ & $\mathbb{Z}_p \ (p \geq 2)$ & Perfect \\
        \bottomrule
    \end{tabular}
    \vspace{0.5em}
    \caption{Comparison of itDPF schemes ($\ell$: the number of servers; $t$: the number of colluding servers; $\mathbb{G}$: the output additive group; $p$: a prime; $\tau$: a positive integer; $\lambda$: the statistical security parameter).}
    \label{tab:existing_itdpf}
\end{table*}

    The scheme $\Pi$ must satisfy the correctness and privacy properties.

    \begin{definition}[itDPF Correctness] \label{def:itdpf-correctness}
        Informally, the scheme is correct if the sum of outputs from all evaluation algorithms equals the point function value.

        Formally, for all $\lambda$, $f_{\alpha, \beta}$, and $i \in [n]$, if $\{k_j\}_{j \in [\ell]} \leftarrow \mathsf{Gen}(1^\lambda, f_{\alpha, \beta})$, then:
        \begin{equation}            
        \Pr\left[ \sum_{j \in [\ell]} \mathsf{Eval}_j(k_j, i) = f_{\alpha, \beta}(i) \right] = 1.
        \end{equation}
    \end{definition}

    \begin{definition}[itDPF Perfect $t$-Privacy] \label{def:itdpf-perfect-privacy}
        Informally, an itDPF scheme $\Pi = ({\sf Gen}, \{\mathsf{Eval}_j\}_{j \in [\ell]})$ is perfect $t$-private if any coalition of at most $t$ servers gains {\bf absolutely no information} about the point function $f_{\alpha, \beta}$.
    
        Formally, for any $\lambda \in \mathbb{N}$, $n \in \mathbb{N}$, abelian groups $\mathbb{G}$, $\alpha_0, \alpha_1 \in [n]$, $\beta_0, \beta_1 \in \mathbb{G}$, and $T \subseteq [\ell]$ with cardinality $\leq t$, the distributions $\mathsf{Gen}_T(1^\lambda, f_{\alpha_0, \beta_0})$ and $\mathsf{Gen}_T(1^\lambda, f_{\alpha_1, \beta_1})$ are {\bf identical} (over $\mathsf{Gen}$'s random coins), where $\mathsf{Gen}_T = \{k_j\}_{j \in T}$.
    \end{definition}
    
    \begin{definition}[itDPF Statistical $t$-Privacy] \label{def:itdpf-statistical-privacy}
        Informally, the scheme is statistically $t$-private if any coalition of at most $t$ servers gains {\bf statistically no information} about $f_{\alpha, \beta}$.
    
        Formally, it satisfies the same conditions as Definition \ref{def:itdpf-perfect-privacy}, except the distributions are {\bf statistically indistinguishable} (i.e., their statistical distance is negligible in the security parameter $\lambda$).
    \end{definition}
    \begin{remark} \label{rem:itdpf-privacy-relation}
        Perfect \( t \)-privacy (Definition \ref{def:itdpf-perfect-privacy}) is strictly stronger than statistical \( t \)-privacy (Definition \ref{def:itdpf-statistical-privacy}): identical distributions are also statistically indistinguishable.
        Consequently, any itDPF scheme satisfying perfect \( t \)-privacy is naturally statistical \( t \)-privacy.
        
        For an itDPF scheme, it suffices to satisfy either perfect \( t \)-privacy or statistical \( t \)-privacy.
    \end{remark}

    Table~\ref{tab:existing_itdpf} summarizes all existing information-theoretic DPFs. From the perspective of the output group $\mathbb{G}$, these schemes can be categorized into two classes: those with prime-order $\mathbb{G}$ and those with prime-power-order $\mathbb{G}$. 
    Notably, schemes offering statistical privacy are only available for prime-order groups, while those with perfect privacy can be realized for both prime-order and prime-power-order groups. 
    Specifically, the prime-order scheme (e.g., \cite{li2025efficient}, Theorem~7) is deployable in scenarios where $t > 1$. 
    In contrast, the prime-power-order schemes (e.g., \cite{li2025efficient}, Theorem~6; \cite{boyle2023information}, Theorem~5) can be deployed with fewer servers and feature a smaller key size.

\subsection{Information-Theoretic Error-Detecting Private Information Retrieval}
Informally, an $\ell$-server information-theoretic error-detecting Private Information Retrieval (itED-PIR) scheme involves $\ell$ servers $\{\mathcal{S}_j\}_{j \in [\ell]}$ and a client $\mathcal{C}$. Each server stores a copy of the identical database $\mathbf{x} = (x_1, \ldots, x_n)$. The client $\mathcal{C}$ intends to retrieve the block $x_{\alpha}$ for some index $\alpha$. 
The scheme guarantees that the client can correctly recover $x_{\alpha}$ when all $\ell$ servers $\{\mathcal{S}_j\}_{j \in [\ell]}$ are honest and respond correctly, or the client outputs a special symbol $\perp$ to indicate incorrect responses.


\begin{figure*}[h!]
    \centering
    \begin{boxedminipage}{12cm}
        \textbf{Experiment $\mathsf{EXP}^{\rm Ver}_{\mathcal{A}, \Gamma}(n, {\bf x}, \alpha, V)$:}
        \begin{enumerate}
            \item The challenger generates $(\{q_j\}_{j \in [\ell]}, {\sf aux}) \leftarrow \mathsf{Que}(1^\lambda, n, \alpha)$ and sends the subset $\{q_j\}_{j \in V}$ to the adversary $\mathcal{A}$.
            
            \item $\mathcal{A}$ returns fraudulent responses $\{\hat{a}_j\}_{j \in V}$ to the challenger.
            
            \item The challenger computes honest responses $a_j \leftarrow \mathsf{Ans}({\bf x}, q_j, j)$ for all $j \in [\ell] \setminus V$.
            
            \item The challenger executes $\hat{\sf res} \leftarrow \mathsf{Rec}(\{ \{\hat{a}_j\}_{j \in V} \cup \{a_j\}_{j \in [\ell] \setminus V}, {\sf aux})$.
            
            \item The experiment outputs $1$ if $\hat{\sf res} \notin \{x_{\alpha}, \perp\}$ (adversary succeeds) and $0$ otherwise.
        \end{enumerate}
    \end{boxedminipage}
    \caption{The verification experiment $\mathsf{EXP}^{\rm Ver}_{\mathcal{A}, \Gamma}(n, {\bf x}, \alpha, V)$.}
    \label{fig:security experiment}
\end{figure*}

\begin{definition}[\bf itED-PIR \citep{ke2022two,ke2023private,li2025efficient,colombo2023authenticated}]
\label{def:itED-PIR}
A ($t$-private) $\ell$-server itED-PIR scheme $\Gamma = ({\sf Que}, {\sf Ans}, {\sf Rec})$ consists of three algorithms:
\begin{itemize}
    \item $(\{q_j\}_{j \in [\ell]}, {\sf aux}) \leftarrow {\sf Que}(1^{\lambda}, n, \alpha)$:  
    A randomized \emph{querying algorithm} executed by the client $\mathcal{C}$. It takes the statistical security parameter $\lambda$, database size $n$, and retrieval index $\alpha \in [n]$ as input, and outputs $\ell$ queries $\{q_j\}_{j \in [\ell]}$ (with $q_j$ sent to server $\mathcal{S}_j$) and auxiliary information ${\sf aux}$ for reconstruction.

    \item $a_j \leftarrow {\sf Ans}({\bf x}, q_j, j)$:  
    A deterministic \emph{answering algorithm} executed by server $\mathcal{S}_j$ ($j \in [\ell]$). It takes the database ${\bf x} = (x_1, \ldots, x_n)$, query $q_j$ and $j$ as input, and outputs a response $a_j$.

    \item ${\sf res} \leftarrow {\sf Rec}(\{a_j\}_{j \in [\ell]}, {\sf aux})$:  
    A deterministic \emph{reconstructing algorithm} executed by the client $\mathcal{C}$. It takes the responses $\{a_j\}_{j \in [\ell]}$ and auxiliary information ${\sf aux}$ as input, and outputs either the correct value ${\sf res} = x_{\alpha}$ when all responses are valid or the symbol ${\sf res} = \perp$ indicating at least one invalid response.
\end{itemize}
\end{definition}
\begin{remark}
    Information-theoretic security for the itED-PIR framework is categorized into two distinct notions: perfect security and statistical security. The statistical security parameter $\lambda$ is only relevant to the statistical security setting. In contrast, the perfect security setting requires no such security parameter.
\end{remark}

The scheme $\Gamma$ must satisfy the following requirements:

\begin{definition}[itED-PIR Correctness] \label{def:edpir-correctness}
    Informally, the scheme is correct if the reconstruction algorithm $\mathsf{Rec}$ outputs the correct database entry $x_\alpha$ when all $\ell$ servers respond honestly with valid answers.

    Formally, for all security parameter $\lambda \in \mathbb{N}$, database size $n \in \mathbb{N}$, database ${\bf x} = (x_1 ,...,x_n)$, and retrieval index $\alpha \in [n]$, if $(\{q_j\}_{j \in [\ell]}, \mathsf{aux}) \leftarrow \mathsf{Que}(1^\lambda, n, \alpha)$, then:
    \begin{equation}        
        \Pr\left[ \mathsf{Rec}\bigl(\{ \mathsf{Ans}({\bf x}, q_j,j)\}_{j \in [\ell]}, \mathsf{aux}\bigr) = x_\alpha \right] = 1.
    \end{equation}
\end{definition}

\begin{definition}[itED-PIR Perfect $t$-Privacy] \label{def:itedpir-perfect-tprivacy}
    Informally, an itED-PIR scheme $\Gamma$ is perfect $t$-private if any coalition of at most $t$ servers gains {\bf absolutely no information} about the client's retrieval index $\alpha$.

    Formally, for any security parameter $\lambda \in \mathbb{N}$, database size $n \in \mathbb{N}$, retrieval indices $\alpha_0, \alpha_1 \in [n]$, and subset $T \subseteq [\ell]$ with cardinality $\leq t$, the distributions $\mathsf{Que}_T(1^\lambda, n, \alpha_0)$ and $\mathsf{Que}_T(1^\lambda, n, \alpha_1)$ are {\bf identical} (over $\mathsf{Que}$'s random coins), where $\mathsf{Que}_T$ denotes the projection of $\mathsf{Que}$'s outputs to $T$, i.e., $\{q_j\}_{j \in T}$.
\end{definition}

\begin{definition}[itED-PIR Statistical $t$-Privacy] \label{def:itedpir-statistical-tprivacy}
    Informally, an itED-PIR scheme $\Gamma$ is statistically $t$-private if any coalition of at most $t$ servers gains {\bf statistically no information} about the client's retrieval index $\alpha$.

    Formally, it satisfies the same conditions as Definition \ref{def:itedpir-perfect-tprivacy}, except the distributions $\mathsf{Que}_T(1^\lambda, n, \alpha_0)$ and $\mathsf{Que}_T(1^\lambda, n, \alpha_1)$ are {\bf statistically indistinguishable} (i.e., their statistical distance is negligible in the security parameter $\lambda$).
\end{definition}

\begin{remark} \label{rem:itedpir-privacy-relation}
    Similar to the privacy relation for itDPFs (Remark \ref{rem:itdpf-privacy-relation}), perfect \( t \)-privacy (Definition \ref{def:itedpir-perfect-tprivacy}) is strictly stronger than statistical \( t \)-privacy (Definition \ref{def:itedpir-statistical-tprivacy}) for itED-PIR schemes. Any itED-PIR scheme satisfying perfect \( t \)-privacy naturally achieves statistical \( t \)-privacy.
    For an itED-PIR scheme, it suffices to satisfy either perfect \( t \)-privacy or statistical \( t \)-privacy.
\end{remark}

\begin{definition}[itED-PIR $(v, \epsilon)$-Verifiability] \label{def:edpir-verifiability}
    Informally, an itED-PIR scheme $\Gamma$ is $(v, \epsilon)$-verifiable if no coalition of up to $v$ servers can deceive the client into outputting a result $\mathsf{res} \notin \{x_{\alpha}, \perp\}$ by providing fraudulent responses for the retrieval index $\alpha$.

    Formally, for any adversary $\mathcal{A}$, any subset $V \subseteq [\ell]$ with cardinality $\leq v$, any database size $n \in \mathbb{N}$, any database ${\bf x} = (x_1,...,x_n)$, and any retrieval index $\alpha \in [n]$, the probability that the verification experiment outputs $1$ is bounded by $\epsilon$:
    \begin{equation}        
        \Pr\left[ \mathsf{EXP}^{\rm Ver}_{\mathcal{A}, \Gamma}(n, {\bf x}, \alpha, V) = 1 \right] \leq \epsilon,
    \end{equation}
    where the experiment $\mathsf{EXP}^{\rm Ver}_{\mathcal{A}, \Gamma}$ is defined in {\bf Fig.}\ref{fig:security experiment}.
\end{definition}

\begin{figure*}[h] 
    \begin{boxedminipage}{\textwidth}
        \underline{Notations}
        \begin{itemize}
            \item $\ell$: The total number of servers.
            \item $t$: The number of servers that may collude to learn the retrieval index.
            \item ${\bf x} = (x_1, \ldots, x_n)$: The binary database, which is a vector in $\{0,1\}^n$.
            \item $\alpha$: The client's retrieval index.
            \item $\mathbb{F}$: The finite field $\mathbb{F} = (\mathbb{Z}_p, +, \cdot)$ (where $p$ is a prime).
            \item $\mathbb{F}^\ast$: The multiplicative group of $\mathbb{F}$, consisting of all non-zero invertible elements (i.e., $\mathbb{F}^\ast = \mathbb{F} \setminus \{0\}$).
            \item $\mathbb{G}$: The additive group of the field $\mathbb{F}$, $\mathbb{G} = (\mathbb{Z}_{p}, +)$.
            \item $\Pi = ({\sf Gen}, \{ {\sf Eval}_j \}_{j \in [\ell]})$: A $t$-private $\ell$-server itDPF scheme with output group $\mathbb{G}$.
        \end{itemize}
        
        \underline{$( \{q_j\}_{j \in [\ell]}, {\sf aux} ) \leftarrow {\sf Que}(1^\lambda, n, \alpha)$}
        \begin{itemize}
            \item[1.] Uniformly sample a nonzero element $\beta \in \mathbb{F}^*$. Generate itDPF keys for two point functions via:
            $\{k_{j,1}\}_{j \in [\ell]} \leftarrow {\sf Gen}(1^\lambda, f_{\alpha, 1})$,
            $\{k_{j,2}\}_{j \in [\ell]} \leftarrow {\sf Gen}(1^\lambda, f_{\alpha, \beta})$.
            For each $j \in [\ell]$, construct query $q_j = (k_{j,1}, k_{j,2})$.
            \item[2.] Output queries $\{q_j\}_{j \in [\ell]}$ and auxiliary information ${\sf aux} = \beta$.
        \end{itemize}
        
        \underline{$a_j \leftarrow {\sf Ans}({\bf x}, q_j, j)$}
        \begin{itemize}
            \item[1.] Parse $q_j = (k_{j,1}, k_{j,2})$. Compute two response components:
            $a_{j,1} = \sum_{i \in [n]} x_i \cdot {\sf Eval}_j(k_{j,1}, i)$,
            $a_{j,2} = \sum_{i \in [n]} x_i \cdot {\sf Eval}_j(k_{j,2}, i)$.
            Output $a_j = (a_{j,1}, a_{j,2})$.
        \end{itemize}

        \underline{${\sf res} \leftarrow {\sf Rec}(\{a_j\}_{j \in [\ell]}, {\sf aux})$}      
        \begin{itemize}      
            \item[1.] Parse each response $a_j = (a_{j,1}, a_{j,2})$, then compute the aggregate results:
                $
                    R_1 = \sum_{j \in [\ell]} a_{j,1},  R_2 = \sum_{j \in [\ell]} a_{j,2}.   
                $
            \item[2.] Extract $\beta$ from ${\sf aux}$. If $\beta \cdot R_1 = R_2$, output $R_1$; otherwise, output $\perp$.
        \end{itemize}
    \end{boxedminipage}
    \caption{$\ell$-Server APIR Scheme Over ED-PIR Model with DPF Output Group $\mathbb{G} = (\mathbb{Z}_p$,+)}
    \label{fig:APIR}
\end{figure*}

The efficiency of a $t$-private $\ell$-server itED-PIR scheme is primarily measured by its communication complexity, defined as the total number of bits exchanged between the client and all servers.
\begin{definition}[\bf Communication Complexity] \label{communication complexity}
The \emph{communication complexity} of scheme $\Gamma$, denoted ${\sf CC}_{\Gamma}(n)$, is the maximum total number of bits communicated between the client and servers over all possible databases and retrieval indices. Formally:
\begin{equation}    
    {\sf CC}_{\Gamma}(n) = \max_{\substack{{\bf x} =(x_1,...,x_n) \\ \alpha \in [n]}}  \sum_{j=1}^{\ell} \left( |q_j| + |a_j| \right) ,
\end{equation}
where $|q_j|, |a_j|$ denote the bit-length of query $q_j$ and response $a_j$.
\end{definition}

\subsection{Information-Theoretic Authenticated Private Information Retrieval} \label{sec:APIR}
Information-theoretic authenticated Private Information Retrieval (itAPIR) \citep{colombo2023authenticated} is the most efficient multi-server itED-PIR scheme to date. For completeness, this section briefly describes a multi-server itAPIR scheme instantiated from $t$-private $\ell$-server itDPFs with the additive output group $\mathbb{G}=(\mathbb{Z}_p, +)$, as shown in Fig. \ref{fig:APIR}.

This itAPIR scheme adopts the standard itED-PIR algorithmic framework — a triple of algorithms $\Gamma = (\mathsf{Que}, \mathsf{Ans}, \mathsf{Rec})$ as defined in Definition \ref{def:itED-PIR}. The randomized query algorithm $\mathsf{Que}$ first samples a uniform non-zero element $\beta \in \mathbb{F}^\ast$, then invokes the itDPF key generation algorithm $\mathsf{Gen}$ to generate two key sets: $\{k_{j,1}\}_{j\in [\ell]} \leftarrow \mathsf{Gen}(1^\lambda, f_{\alpha, 1})$ and $\{k_{j,2}\}_{j \in [\ell]} \leftarrow \mathsf{Gen}(1^\lambda, f_{\alpha, \beta})$, corresponding to the point functions $f_{\alpha,1}$ and $f_{\alpha,\beta}$ respectively. For each server $j \in [\ell]$, the query $q_j = (k_{j,1}, k_{j,2})$ is formed by concatenating these two keys, and the algorithm outputs the query set $\{q_j\}_{j\in [\ell]}$ along with the auxiliary information $\mathsf{aux} = \beta$ for reconstruction and verification.

The deterministic answering algorithm $\mathsf{Ans}$ for server $\mathcal{S}_j$ operates as follows: upon receiving the query $q_j$ and taking the database ${\bf x}$ as input, the server parses $q_j = (k_{j,1}, k_{j,2})$ and computes two response components by evaluating the itDPF evaluation algorithm $\mathsf{Eval}_j$ over all database indices. The components are derived from weighted sums of the database elements $x_i$, with weights given by the itDPF evaluation results, i.e.,
$
    a_{j,1} = \sum_{i \in [n]} x_i \cdot \mathsf{Eval}_j(k_{j,1}, i), \quad  
    a_{j,2} = \sum_{i \in [n]} x_i \cdot \mathsf{Eval}_j(k_{j,2}, i).
$
The server then outputs the response $a_j = (a_{j,1}, a_{j,2})$ for the client's reconstruction.

The deterministic reconstructing algorithm $\mathsf{Rec}$ takes the full set of server responses $\{a_j\}_{j\in [\ell]}$ and the auxiliary information $\mathsf{aux}$ as input, first aggregating the component-wise responses across all servers to obtain two aggregated values:
$R_1 = \sum_{j\in [\ell]} a_{j,1}, \quad R_2 = \sum_{j\in [\ell]} a_{j,2}.$
It then conducts a verification check by verifying the equality $\beta \cdot R_1 = R_2$. If the equality holds, the algorithm outputs $R_1$ as the correct retrieved database entry $x_\alpha$; otherwise, it outputs the error symbol $\perp$ to indicate the presence of fraudulent or tampered server responses.

By the $t$-privacy guarantee of the underlying itDPF scheme, any adversary controlling at most $t$ colluding servers gains no information about the point functions $f_{\alpha,1}$ and $f_{\alpha,\beta}$, and thus cannot recover the secret value $\beta$ or the retrieval index $\alpha$. As a result, such an adversary is unable to construct forged response components $a_{j,1}',$ $a_{j,2}'$ for the colluding servers that would lead the client to accept a false result $R_1' \neq R_1$ while satisfying the verification equality $\beta \cdot R_1' = R_2'$.

\begin{figure*}[tbp] 
	\begin{boxedminipage}{\textwidth}
        \underline{Notations}
        \begin{itemize}
            \item $\ell$: The total number of servers.
            \item $t$: The number of servers that may collude to learn the retrieval index. 
            \item ${\bf x} = (x_1,\ldots,x_n)$: The binary database (1-bit per entry).
            \item $\alpha$: The client's retrieval index.
            \item $\mathbb{R}$: The ring $\mathbb{R} = (\mathbb{Z}_{p^\tau},+,\cdot)$ (where $p$ is a prime and $\tau$ is a positive integer).
            \item $\mathbb{G}$: The additive group of the ring $\mathbb{R}$, $\mathbb{G} = (\mathbb{Z}_{p^{\tau}}, +)$.
            \item $\Pi = (\mathsf{Gen}, \{ \mathsf{Eval}_j\}_{j\in [\ell]})$: A $t$-private $\ell$-server itDPF scheme with output group $\mathbb{G}$.
        \end{itemize}
                
        \underline{$ ( \{q_j\}_{j\in [\ell]}, \mathsf{aux} ) \leftarrow \mathsf{Que}(1^{\lambda}, n,\alpha)$}
        \begin{itemize}
		\item[1.]
            Uniformly sample an element $\beta \in \mathbb{R}^*$. Generate DPF keys via $\{k_j\}_{j \in [\ell]} \leftarrow \mathsf{Gen}(1^\lambda, f_{\alpha, \beta})$, where $f_{\alpha, \beta}$ is the point function over $\mathbb{G}$. For each $j \in [\ell]$, set query $q_j = k_j$.
            
            \item[2.] 
            Output queries $\{q_j\}_{j \in [\ell]}$ and auxiliary information $\mathsf{aux} = \beta$.
        \end{itemize}
        
        \underline{$ a_j \leftarrow \mathsf{Ans}({\bf x}, q_j,j)$}
        \begin{itemize}
                \item[1.]
                Evaluate the answer
                $
                a_j = \sum_{i \in [n]} x_{i} \cdot  \mathsf{Eval}_j (q_j,i),
                $
			and output $a_j$.
        \end{itemize}

        \underline{$\mathsf{res} \leftarrow \mathsf{Rec}(\{a_j\}_{j \in [\ell]}, \mathsf{aux})$}      
        \begin{itemize}      
            \item[1.] 
            Compute the aggregate result	
            $
            R_{\text{agg}}=\sum_{j\in [\ell]} a_{j}.
            $
            \item[2.]
            Extract $\beta$ from $\mathsf{aux}$. If $\beta^{-1} \cdot R_{\text{agg}} \in \{0,1\}$, output $\beta^{-1} \cdot R_{\text{agg}}$; otherwise output $\perp$ to indicate tampered responses.
		\end{itemize}
	\end{boxedminipage}
	\caption{$\ell$-Server itED-PIR scheme $\Gamma = (\mathsf{Que},\mathsf{Ans},\mathsf{Rec})$ for $t$-private itDPF $\Pi$ with output group $\mathbb{G} =  (\mathbb{Z}_{p^{\tau}},+)$. This base construction targets binary databases ($1$-bit data entry) and is generalized to databases with $m$-bit data entry for arbitrary $m$ in Appendix \ref{sec:mbit}.}
	\label{fig:construction}
\end{figure*}

\section{Efficient itED-PIR Scheme}\label{sec:pirrv}
This section presents a \( t \)-private \( \ell \)-server itED-PIR scheme \( \Gamma = (\mathsf{Que},\mathsf{Ans},\mathsf{Rec}) \) targeting binary databases \( {\bf x}\in \{0,1\}^n \) (see Fig. \ref{fig:construction}). 
Notably, the design is generalizable to databases with \( m \)-bit data entry for arbitrary positive integer \( m \), with the detailed construction provided in Appendix \ref{sec:mbit}.

\subsection{The Construction}  
The scheme depicted in Fig. \ref{fig:construction} is built upon a \( t \)-private \( \ell \)-server itDPF scheme \( \Pi = (\mathsf{Gen}, \{ \mathsf{Eval}_j \}_{j \in [\ell]}) \) with output group $\mathbb{G} =  (\mathbb{Z}_{p^{\tau}},+)$, where \( \mathbb{G} \) is the additive group of the ring \( \mathbb{R} = (\mathbb{Z}_{p^\tau}, +, \cdot) \). 
The core idea is that the \( t \)-privacy of the underlying itDPF protects the random invertible element \( \beta \in \mathbb{R}^* \): for any adversarial offset \( \Delta \) introduced by malicious servers, it is extremely unlikely that \( \beta^{-1} \cdot (R_{\text{agg}}+\Delta) \) fall within the valid entry set ({\{0,1\}} for binary databases, thus enabling efficient response verification via basic ring operations.

Formally, the ring \( \mathbb{R} = (\mathbb{Z}_{p^\tau}, +, \cdot) \) is defined with addition and multiplication modulo \( p^\tau \), and \( \mathbb{R}^* \) denotes the multiplicative group of \( \mathbb{R} \). 
The \( t \)-private \( \ell \)-server itED-PIR scheme \( \Gamma = (\mathsf{Que}, \mathsf{Ans}, \mathsf{Rec}) \) inherits the security of the itDPF scheme \( \Pi \) and comprises three algorithms as follows.

The querying algorithm \( \mathsf{Que} \) is executed by the client. It first uniformly samples an element \( \beta \in \mathbb{R}^* \), then invokes the itDPF key generation algorithm \( \mathsf{Gen} \) to generate a set of keys \( \{k_j\}_{j \in [\ell]} \) for point function \( f_{\alpha, \beta} \) for the target retrieval index \( \alpha \). 
Each key \( k_j \) is sent to server \( \mathcal{S}_j \) as a query, and the client retains the auxiliary information \( \mathsf{aux} = \beta \) for subsequent verification.

The answering algorithm \( \mathsf{Ans} \) is executed by each server \( \mathcal{S}_j \). Upon receiving query \( k_j \), the server evaluates the itDPF at every database index \( i \in [n] \) via \( \mathsf{Eval}_j(k_j, i) \), computes the product with the database entry \( x_i \), and sums these products over all indices to generate a response \( a_j = \sum_{i \in [n]} x_i \cdot \mathsf{Eval}_j(k_j, i) \), which is returned to the client. 

The reconstructing algorithm \( \mathsf{Rec} \) is executed by the client. It first aggregates all server responses to compute the total sum \( R_{\text{agg}} = \sum_{j \in [\ell]} a_j \), then calculates \( \beta^{-1} \cdot R_{\text{agg}} \) using the auxiliary information \( \mathsf{aux} = \beta \). 
The client outputs \( \beta^{-1} \cdot R_{\text{agg}} \) if the result lies in \( \{0,1\} \), and outputs \( \perp \) otherwise to indicate detected tampering or invalid responses. 

When all servers are honest, the final output satisfies \( \beta^{-1} \cdot R_{\text{agg}} = x_{\alpha} \in \{0,1\} \). 
When there are at most \( t \) malicious servers, since the itDPF scheme \( \Pi \) is \( t \)-private, an adversary controlling these malicious servers cannot obtain information about the point function \( f_{\alpha, \beta} \). As a result, the output \( \beta^{-1} \cdot R_{\text{agg}} \) will, with high probability, lies outside \( \{0,1\} \). This property enables verification of the correctness of the result.

\subsection{Analysis}
    This section comprehensively analyzes the properties of the itED-PIR scheme $\Gamma$ based on the itDPF scheme $\Pi$, including correctness, privacy, verifiability, and communication complexity. 
    
    \vspace{2mm}
    \noindent
    \begin{theorem} \label{thm:correctness}
        The $\ell$-server itED-PIR scheme $\Gamma$ in {\bf Fig.} \ref{fig:construction} constructed by itDPF shceme $\Pi$ is correct (Definition \ref{def:edpir-correctness}) if the scheme $\Pi$ is correct (Definition \ref{def:itdpf-correctness}).
    \end{theorem}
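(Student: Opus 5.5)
The plan is to unfold the three algorithms of Fig.~\ref{fig:construction} on an arbitrary honest execution and push the itDPF correctness guarantee (Definition~\ref{def:itdpf-correctness}) through the linear aggregation step. We fix arbitrary $\lambda$, $n$, ${\bf x}\in\{0,1\}^n$ and $\alpha\in[n]$. We also fix any outcome of the random coins of $\mathsf{Que}$, namely a unit $\beta\in\mathbb{R}^*$ and keys $\{k_j\}_{j\in[\ell]}\leftarrow\mathsf{Gen}(1^\lambda,f_{\alpha,\beta})$. We then show that, with probability $1$ over $\mathsf{Gen}$'s coins, $\mathsf{Rec}$ outputs $x_\alpha$.

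The first step is to write the honest aggregate and swap the two finite sums, which is legitimate because all arithmetic takes place in the commutative ring $\mathbb{R}=(\mathbb{Z}_{p^\tau},+,\cdot)$ and multiplication distributes over addition:
\begin{equation*}
R_{\text{agg}}=\sum_{j\in[\ell]}\sum_{i\in[n]}x_i\cdot\mathsf{Eval}_j(k_j,i)=\sum_{i\in[n]}x_i\cdot\Bigl(\sum_{j\in[\ell]}\mathsf{Eval}_j(k_j,i)\Bigr).
\end{equation*}
Here $x_i\in\{0,1\}$ is read as the ring element $0$ or $1$. Definition~\ref{def:itdpf-correctness} gives $\sum_{j}\mathsf{Eval}_j(k_j,i)=f_{\alpha,\beta}(i)$ for each fixed $i$, with probability $1$. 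Since there are only finitely many indices $i\in[n]$, a union bound over the $n$ probability-zero failure events shows that these equalities hold simultaneously for all $i$ with probability $1$. Then, by~\eqref{eq:point function}, $R_{\text{agg}}=x_\alpha\cdot\beta$.

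The second step is the check in $\mathsf{Rec}$. Because $\beta$ is sampled from $\mathbb{R}^*$, the inverse $\beta^{-1}$ exists, and by commutativity $\beta^{-1}\cdot R_{\text{agg}}=x_\alpha\cdot\beta^{-1}\beta=x_\alpha$. This value lies in $\{0,1\}$, so $\mathsf{Rec}$ takes the non-$\perp$ branch and outputs exactly $x_\alpha$. This is the conclusion required by Definition~\ref{def:edpir-correctness}.

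There is no real obstacle in this argument. The only points that need care are three small ones. First, the per-index probability-$1$ statement must be lifted to all $n$ indices at once, which the finite union bound above handles. Second, the invertibility of $\beta$ is exactly where the choice $\beta\in\mathbb{R}^*$ rather than $\beta\in\mathbb{R}$ is used; in $\mathbb{Z}_{p^\tau}$ with $\tau\ge2$, a non-unit would make $\beta^{-1}$ undefined. Third, embedding the bits $x_i$ into $\mathbb{R}$ must preserve the value $1$, which holds because $1$ is the identity of $\mathbb{R}$.
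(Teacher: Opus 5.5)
Your proof is correct and follows the paper's argument: you interchange the two sums, apply itDPF correctness to get $R_{\text{agg}} = \beta \cdot x_\alpha$, and use the invertibility of $\beta \in \mathbb{R}^*$ to recover $x_\alpha$. Your additional points make explicit what the paper leaves implicit: the union bound over the $n$ indices, and the observation that $\beta^{-1} \cdot R_{\text{agg}} \in \{0,1\}$, so $\mathsf{Rec}$ takes the non-$\perp$ branch.
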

\begin{proof}
    For any $\lambda\in \mathbb{N}$, any $n\in \mathbb{N} $, any binary database ${\bf x}\in \{0,1\}^n$, any $(\{q_j\}_{j\in [\ell]}, {\sf aux}) \leftarrow {\sf Que}(1^{\lambda},n,\alpha)$, if ${\sf res} \leftarrow {\sf Rec}(\{{\sf Ans}({\bf x},q_j,j)\}_{j\in [\ell]},{\sf aux} )$, 
    by the correctness of the DPF scheme $\Pi$, it follows that:
    \begin{equation}        
        \sum_{j\in [\ell]} {\sf Eval}_j(q_j, i) = f_{\alpha, \beta}(i) \quad \forall i \in [n],
    \end{equation}
    where $\beta \in \mathbb{R}^*$ is the randomly sampled element retained in ${\sf aux}$. Further, the aggregated response $R_{\text{agg}}$ satisfies:
    \begin{equation}        
    \begin{aligned}            
        R_{\text{agg}} 
        &= 
        \sum_{j\in [\ell]} {\sf Ans}({\bf x},q_j,j) 
        \\
        &= 
        \sum_{j\in [\ell]}  
        \sum_{i \in [n]} 
        \left(x_{i} \cdot {\sf Eval}_j (q_j,i)\right)
        \\
        &=
        \sum_{i\in [n]} \left( x_i \cdot \sum_{j \in [\ell]} {\sf Eval}_j (q_j,i) \right)
        \\
        &=
        \sum_{i\in [n]} \left( x_i \cdot f_{\alpha, \beta}(i) \right)
        \\
        &=
        \beta \cdot x_\alpha
    \end{aligned}
    \end{equation}
    (Note: $f_{\alpha, \beta}(i) = \beta$ if $i=\alpha$, and $f_{\alpha, \beta}(i) = 0$ otherwise.) Since $\beta \in \mathbb{R}^*$ is multiplicatively invertible, it follows that:
    $
        {\sf res} = \beta^{-1} \cdot R_{\text{agg}} = x_{\alpha}, \quad \Pr \left[ {\sf res} = x_{\alpha} \right] = 1.
    $
\end{proof}

\begin{theorem} \label{thm:privacy}
    The $\ell$-server itED-PIR scheme $\Gamma$ in {\bf Fig.} \ref{fig:construction} constructed by itDPF shceme $\Pi$ is perfect $t$-private (Definition \ref{def:itedpir-perfect-tprivacy}) if the scheme $\Pi$ is perfect $t$-private (Definition \ref{def:itdpf-perfect-privacy}). 
    Similarly, $\Gamma$ is statistical $t$-private (Definition \ref{def:itedpir-statistical-tprivacy}) if $\Pi$ is statistical $t$-private (Definition \ref{def:itdpf-statistical-privacy}).
\end{theorem}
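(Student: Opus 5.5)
The plan is to reduce the privacy of $\Gamma$ directly to the privacy of $\Pi$ via a mixture argument over the random unit $\beta$. For a coalition $T\subseteq[\ell]$ with $|T|\le t$, the view of the coalition under $\mathsf{Que}(1^\lambda,n,\alpha)$ is exactly $\{k_j\}_{j\in T}$, where $\beta$ is sampled uniformly from $\mathbb{R}^*$ and then $\{k_j\}_{j\in[\ell]}\leftarrow\mathsf{Gen}(1^\lambda,f_{\alpha,\beta})$. The auxiliary information $\mathsf{aux}=\beta$ is kept by the client and is not part of $\mathsf{Que}_T$. Hence
\[
\mathsf{Que}_T(1^\lambda,n,\alpha)=\sum_{b\in\mathbb{R}^*}\frac{1}{|\mathbb{R}^*|}\,\mathsf{Gen}_T(1^\lambda,f_{\alpha,b}),
\]
viewed as a mixture of distributions.

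\textbf{Perfect case.} Fix $\alpha_0,\alpha_1\in[n]$. For every $b\in\mathbb{R}^*\subseteq\mathbb{G}$, apply Definition~\ref{def:itdpf-perfect-privacy} with $(\alpha_0,b)$ and $(\alpha_1,b)$. This gives $\mathsf{Gen}_T(1^\lambda,f_{\alpha_0,b})\equiv\mathsf{Gen}_T(1^\lambda,f_{\alpha_1,b})$. Both mixtures use the same weights $1/|\mathbb{R}^*|$, because the distribution of $\beta$ does not depend on $\alpha$. The two mixtures are therefore identical, which is Definition~\ref{def:itedpir-perfect-tprivacy}.

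\textbf{Statistical case.} Here I would use the convexity of statistical distance under mixtures with common weights:
\[
\mathrm{SD}\bigl(\mathsf{Que}_T(\alpha_0),\mathsf{Que}_T(\alpha_1)\bigr)\le\sum_{b}\tfrac{1}{|\mathbb{R}^*|}\,\mathrm{SD}\bigl(\mathsf{Gen}_T(f_{\alpha_0,b}),\mathsf{Gen}_T(f_{\alpha_1,b})\bigr)\le\max_b\mathrm{SD}(\cdots).
\]
The convexity step itself is routine: expand $\frac12\sum_v|\cdot|$ and apply the triangle inequality.

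The main obstacle is making sure the right-hand side is negligible in $\lambda$. This needs the itDPF statistical privacy bound to hold uniformly over $\beta$ (in fact over all $\alpha_0,\alpha_1,\beta_0,\beta_1$). That is how Definition~\ref{def:itdpf-statistical-privacy} is phrased, as a guarantee for arbitrary inputs. The bound is also a maximum over at most $|\mathbb{R}^*|$ values that does not depend on $\lambda$. So the maximum is negligible, and statistical $t$-privacy of $\Gamma$ follows.

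An alternative is to apply the definition directly with $\beta_0=\beta_1=b$ and note that the hybrid bound is taken pointwise. Either way, the uniformity over $\beta$ is the only point I would state explicitly.
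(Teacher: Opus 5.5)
Your proposal is correct and follows essentially the same route as the paper: the coalition's view is exactly $\{k_j\}_{j\in T}$, so privacy transfers directly from the itDPF. Your explicit mixture over $\beta$ with common weights, and the convexity bound in the statistical case, make precise what the paper leaves implicit when it simply replaces ``identical'' with ``statistically indistinguishable''.
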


\begin{proof}
    The proof for perfect $t$-privacy of $\Gamma$ is presented as follows; the proof for statistical $t$-privacy holds identically with the term ``identical" replaced by ``statistically indistinguishable" in all subsequent steps.

    By the definition of perfect $t$-privacy for the itDPF scheme $\Pi$, for any security parameter $\lambda \in \mathbb{N}$, database size $n \in \mathbb{N}$, abelian output group $\mathbb{G}$, retrieval indices $\alpha_0, \alpha_1 \in [n]$, elements $\beta_0, \beta_1 \in \mathbb{G}$, and any subset $T \subseteq [\ell]$ with $|T| \leq t$, the distributions of $\mathsf{Gen}_T(1^\lambda, f_{\alpha_0, \beta_0})$ and $\mathsf{Gen}_T(1^\lambda, f_{\alpha_1, \beta_1})$ are {\bf identical}. Here, $\mathsf{Gen}_T$ denotes the projection of the itDPF key generation algorithm $\mathsf{Gen}$ onto the subset $T$, i.e., $\mathsf{Gen}_T = \{k_j\}_{j \in T}$.

    For the itED-PIR scheme $\Gamma = (\mathsf{Que}, \mathsf{Ans}, \mathsf{Rec})$, the view of colluding servers in $T$ is exactly the itDPF keys $\{k_j\}_{j \in T}$. 
    Therefore the distribution of the queries $\{ {\sf Que}_{T}(1^{\lambda} ,n , \alpha_0) \}$ and $\{ {\sf Que}_{T}(1^{\lambda}, n, \alpha_1)  \}$ are identical for any $\beta_0, \beta_1$, implying $\Gamma$ is perfect $t$-private.
\end{proof}

\noindent

To prove verifiability, Lemma \ref{Lemma:verifiability} characterizes the condition that an adversary must satisfy when deviating from the prescribed protocol—specifically, the tampering required for the \({\sf Rec}\) algorithm to output a result \({\sf res} \neq \perp\) (i.e., pass the security check without returning an error).

\begin{lemma} \label{Lemma:verifiability}
Let \(\Gamma\) be the \(\ell\)-server \(t\)-private itED-PIR scheme introduced in {\bf Fig.} \ref{fig:construction}. For any database size \(n \in \mathbb{N}\), any binary database \({\bf x} = (x_1, \ldots, x_n) \in \{0,1\}^n\), any index \(\alpha \in [n]\), any subset \(T \subseteq [\ell]\) with \(|T| \leq t\), and any non-zero offsets \(\{\Delta_j\}_{j \in [\ell]}\) (where \(\Delta_j = 0\) for \(j \notin T\) and \(\Delta = \sum_{j \in [\ell]} \Delta_j \neq 0\)), the following bound holds:
\begin{equation}
\Pr\left[ 
\begin{aligned}    
y \in \{0,1\} \setminus \{x_\alpha\} :&  \\
(\{q_j\}_{j \in [\ell]}, {\sf aux}) \leftarrow& {\sf Que}(1^\lambda, n, \alpha) \\
a_j \leftarrow& {\sf Ans}({\bf x}, q_j, j) \quad \forall j \in [\ell] \\
a'_j =& a_j + \Delta_j \quad \forall j \in [\ell] \\
y \leftarrow& {\sf Rec}({\sf aux}, \{a'_j\}_{j \in [\ell]})
\end{aligned}
\right]
\leq \frac{1}{|\mathbb{R}^*|},
\end{equation}
where \(\mathbb{R} = (\mathbb{Z}_{p^\tau}, +, \cdot)\) is the underlying ring, \(\mathbb{R}^*\) denotes its multiplicative group (with order \(|\mathbb{R}^*| = p^\tau - p^{\tau-1}\) by Euler's totient function), and the probability is taken over all random coins used by the algorithms.
\end{lemma}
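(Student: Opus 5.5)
The plan is to compute the reconstructed value explicitly in terms of $\beta$ and the total offset $\Delta$. The target event then becomes a single linear equation in $\beta$, which a uniform $\beta\in\mathbb{R}^*$ satisfies with probability at most $1/|\mathbb{R}^*|$.

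\textbf{Step 1 (reduce the event to an equation in $\beta$).} By the computation in the proof of Theorem~\ref{thm:correctness}, the honest aggregate is $\sum_{j\in[\ell]} a_j=\beta\cdot x_\alpha$. With the offsets added, the client therefore computes
\[
R'_{\text{agg}}=\sum_{j\in[\ell]}a'_j=\beta x_\alpha+\Delta,
\qquad
\beta^{-1}R'_{\text{agg}}=x_\alpha+\beta^{-1}\Delta .
\]
By the definition of $\mathsf{Rec}$ in Fig.~\ref{fig:construction}, the output $y$ lies in $\{0,1\}\setminus\{x_\alpha\}$ exactly when $x_\alpha+\beta^{-1}\Delta=1-x_\alpha$. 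This is equivalent to
\[
\beta^{-1}\Delta=1-2x_\alpha=:u .
\]
Since $x_\alpha\in\{0,1\}$, we have $u\in\{1,-1\}$. This $u$ is a unit of $\mathbb{R}$ for every prime $p$, including $p=2$, and it depends only on $\mathbf{x}$ and $\alpha$, not on any randomness. Multiplying by $\beta u^{-1}$, the event is equivalent to $\beta=u^{-1}\Delta=u\Delta$.

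\textbf{Step 2 (bound the probability).} In $\mathsf{Que}$, $\beta$ is sampled uniformly from $\mathbb{R}^*$, and the offsets $\{\Delta_j\}$, hence $\Delta$ and $u\Delta$, are fixed in the lemma's statement. There are two cases.
\begin{itemize}
\item If $u\Delta\notin\mathbb{R}^*$, which happens exactly when $p\mid\Delta$, the event is impossible and the probability is $0$.
\item Otherwise the event is that a uniform element of $\mathbb{R}^*$ equals one fixed unit, which has probability exactly $1/|\mathbb{R}^*|$.
\end{itemize}
In both cases the bound holds. The count $|\mathbb{R}^*|=\varphi(p^\tau)=p^\tau-p^{\tau-1}$ is standard. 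The DPF key randomness plays no role, because the equation involves only $\beta$.

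\textbf{Main subtlety.} Step 2 needs $\Delta$ to be independent of $\beta$. In the lemma this holds trivially because the offsets are fixed in advance. The real difficulty lies in the later verifiability theorem, where the adversary picks $\Delta_j$ after seeing $\{k_j\}_{j\in T}$. There I would show that the coalition's view is independent of $\beta$, using the $t$-privacy of $\Pi$ as in Theorem~\ref{thm:privacy}. I would then condition on each fixed view and apply this lemma pointwise, replacing the bound by $1/|\mathbb{R}^*|$ plus a negligible term in the statistical case. For the lemma itself, I would only remark that the conclusion holds conditionally on any fixed choice of offsets.
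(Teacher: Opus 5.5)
Your proposal is correct and follows the paper's own argument. Both reduce false acceptance to the single linear equation $\beta^{-1}\Delta = 1-2x_\alpha$, where $1-2x_\alpha\in\{\pm1\}$ is a unit, so at most one $\beta\in\mathbb{R}^*$ satisfies it and uniform sampling gives the $1/|\mathbb{R}^*|$ bound. Your case split on whether $u\Delta$ is a unit is just a sharper form of the paper's ``at most one solution.'' Your closing remark on the adversary choosing $\Delta_j$ after seeing $\{k_j\}_{j\in T}$ correctly flags what the paper's one-line proof of Theorem~\ref{thm:verifiable} leaves implicit: the coalition's view must be independent of $\beta$, with an extra negligible term in the statistical case.
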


\begin{proof}
Let \(\beta\) be the element uniformly sampled from \(\mathbb{R}^*\), \(R_{\text{agg}} = \sum_{j \in [\ell]} a_j\) denote the honest aggregate response, and \(\Delta = \sum_{j \in [\ell]} \Delta_j \neq 0\) denote the aggregate tampering offset. By the correctness of \(\Gamma\) (Theorem \ref{thm:correctness}), it holds that:
\begin{equation}    
\beta^{-1} \cdot R_{\text{agg}} = x_\alpha \in \{0,1\}.
\end{equation}

A ``false acceptance" (i.e., the tampered response is incorrectly judged as valid) occurs if and only if the reconstructed result lies in the verification set \(\{0,1\}\) but is not the true value \(x_\alpha\), which is equivalent to:
\begin{equation}
\beta^{-1} \cdot \left(R_{\text{agg}} + \Delta \right) = 1 - x_\alpha.
\end{equation}

Substituting \(\beta^{-1} R_{\text{agg}} = x_\alpha\) into the above equation, the probability stated in the lemma can be rewritten as:
\begin{equation}    
\begin{aligned}
    v &= \Pr\left[ \beta^{-1}\cdot\Delta = (1-2 x_{\alpha})\right] \\
    &= \Pr[(1-2x_{\alpha}) \beta - \Delta = 0]
\end{aligned}
\end{equation}

Since \(x_{\alpha} \in \{0,1\}\), it holds that \(1-2x_{\alpha} \in \{-1,1\}\) — both elements are non-zero and multiplicatively invertible in \(\mathbb{R}^*\). Combined with \(\Delta \neq 0\), the expression \((1-2x_{\alpha}) \beta - \Delta\) forms a non-zero linear (degree-1) constraint on \(\beta\). 

For the multiplicative group \(\mathbb{R}^*\), the non-zero linear constraint has at most one solution. Since \(\beta\) is uniformly sampled from \(\mathbb{R}^*\), the probability that \(\beta\) is exactly the solution to the constraint is bounded by the number of solutions divided by the size of \(\mathbb{R}^*\).

Given the number of solutions is at most 1, it follows that:
\begin{equation}    
    v \leq \frac{1}{|\mathbb{R}^{*}|}
\end{equation}
\end{proof}
\begin{theorem} \label{thm:verifiable}
    The $\ell$-server itED-PIR scheme $\Gamma$ in {\bf Fig.} \ref{fig:construction} constructed by itDPF shceme $\Pi$ is \(\left(t, \frac{1}{|\mathbb{R}^*|}\right)\)-verifiable if the scheme $\Pi$ is $t$-private, where \(|\mathbb{R}^*|  = p^\tau - p^{\tau-1}\).
\end{theorem}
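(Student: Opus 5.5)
The proof reduces the verification experiment of Fig.~\ref{fig:security experiment} to Lemma~\ref{Lemma:verifiability}. The one extra ingredient is that the adversary's tampering cannot depend on $\beta$, which is exactly what the $t$-privacy of $\Pi$ provides.

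Fix an adversary $\mathcal{A}$, a set $V\subseteq[\ell]$ with $|V|\le t$, a database $\mathbf{x}\in\{0,1\}^n$ and an index $\alpha$. Since each $\mathsf{Ans}$ is deterministic, the honest answers $a_j=\mathsf{Ans}(\mathbf{x},q_j,j)$ for $j\in V$ are well defined even though the adversary replaces them. Define the offsets $\Delta_j=\hat a_j-a_j$ for $j\in V$ and $\Delta_j=0$ otherwise, and set $\Delta=\sum_j\Delta_j$. By the correctness computation in Theorem~\ref{thm:correctness}, the reconstructed aggregate is $\beta x_\alpha+\Delta$.

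If $\Delta=0$, then $\mathsf{Rec}$ outputs $x_\alpha$ and the experiment outputs $0$. So the experiment outputs $1$ only if $\Delta\neq 0$ and $\beta^{-1}(\beta x_\alpha+\Delta)=1-x_\alpha$. This is equivalent to $\beta=(1-2x_\alpha)^{-1}\Delta$, where $1-2x_\alpha\in\{\pm1\}$ is a unit.

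The key step is the conditioning argument, which is where Lemma~\ref{Lemma:verifiability} alone does not suffice, since the lemma treats fixed offsets. Here $\Delta$ is a function of $\{q_j\}_{j\in V}=\{k_j\}_{j\in V}$ and of $\mathcal{A}$'s coins. Without loss of generality $\mathcal{A}$ is deterministic: averaging over its coins preserves any bound that holds for each fixed choice of coins.

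First assume $\Pi$ is perfectly $t$-private. The distribution of $\mathsf{Gen}_V(f_{\alpha,\beta})$ is then the same for every $\beta$. Hence $\beta$, sampled uniformly from $\mathbb{R}^*$, is independent of the view $\{k_j\}_{j\in V}$: for each fixed view the conditional distribution of $\beta$ is still uniform on $\mathbb{R}^*$. Conditioning on the view fixes $\Delta$, and the bad event becomes $\beta=c$ for a single fixed value $c$. This is precisely the situation of Lemma~\ref{Lemma:verifiability}, so the conditional probability is at most $1/|\mathbb{R}^*|$. Averaging over views gives $\Pr[\mathsf{EXP}^{\rm Ver}=1]\le 1/|\mathbb{R}^*| = 1/(p^\tau-p^{\tau-1})$.

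For statistical $t$-privacy, the same argument gives the bound up to an additive negligible term. To see this, compare the real experiment with a hybrid in which the keys are generated for $f_{\alpha,\beta'}$ with an independent $\beta'$. The two runs differ by at most the statistical distance between the key distributions, which is negligible in $\lambda$, and in the hybrid $\beta$ is independent of the view. I would note this slack explicitly, or absorb it into $\epsilon$ as the paper's statistical setting intends; it is the only delicate point of the proof.
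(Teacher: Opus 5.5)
Your proof is correct. Its core computation matches the paper's: the paper proves the theorem in one line by citing Lemma~\ref{Lemma:verifiability}, whose proof reduces false acceptance to $(1-2x_\alpha)\beta=\Delta$ and bounds the chance that a uniform $\beta\in\mathbb{R}^*$ hits the at most one solution.

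The difference is in what the lemma covers. It is stated for offsets $\{\Delta_j\}$ fixed in advance, and its proof never invokes the privacy of $\Pi$. So the paper's ``follows directly'' only covers non-adaptive tampering, whereas in $\mathsf{EXP}^{\rm Ver}$ the adversary chooses $\hat a_j$ after seeing $\{k_j\}_{j\in V}$.

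Your conditioning step closes this gap. Perfect $t$-privacy makes the view independent of $\beta$, so conditioned on the view, $\Delta$ is fixed while $\beta$ is still uniform on $\mathbb{R}^*$. This carries the lemma over to the actual experiment, and it is the only place the theorem's hypothesis on $\Pi$ is really used. The paper's version buys brevity; yours buys a proof that matches the definition.

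Your caveat for the statistical case is also right. The argument gives only $\frac{1}{|\mathbb{R}^*|}+\mathrm{negl}(\lambda)$, and the exact bound claimed in the theorem cannot hold for every statistically private $\Pi$. For example, a $\Pi$ that leaks $\beta$ to a corrupted server with probability $2^{-\lambda}$ lets an adversary exceed $\frac{1}{|\mathbb{R}^*|}$ whenever $|\mathbb{R}^*|>1$.

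One wording point for your hybrid: state the bad event as the predicate $\beta=(1-2x_\alpha)^{-1}\Delta(\mathrm{view})$ on the pair $(\beta,\mathrm{view})$, rather than rerunning $\mathsf{Rec}$. In the hybrid the honest aggregate is $\beta' x_\alpha$, not $\beta x_\alpha$.
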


\begin{proof}
    This theorem follows directly from Lemma \ref{Lemma:verifiability}.
\end{proof}

\begin{theorem}
    Suppose there exists an $\ell$-server $t$-private itDPF scheme $\Pi$ with output additive group $\mathbb{G} = (\mathbb{Z}_{p^\tau}, +)$ for database size $n \in \mathbb{N}$, which, on security parameter $\lambda \in \mathbb{N}$, outputs secret keys of length $L_{\mathsf{DPF}}$. Then, there is an $\ell$-server $t$-private itED-PIR scheme $\Gamma$ for binary databases $\mathbf{x} \in \{0,1\}^n$ with query complexity $\ell \cdot L_{\mathsf{DPF}}$ bits and answer complexity $\ell \cdot \tau \log p$ bits.
\end{theorem}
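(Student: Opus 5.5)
The plan is to take $\Gamma$ to be exactly the construction of Fig.~\ref{fig:construction}, instantiated with the given itDPF $\Pi$ over $\mathbb{G}=(\mathbb{Z}_{p^\tau},+)$, and then count bits. The only work beyond counting is checking that $\Gamma$ really is an $\ell$-server $t$-private itED-PIR scheme. Correctness is Theorem~\ref{thm:correctness}. Privacy, perfect or statistical according to $\Pi$, is Theorem~\ref{thm:privacy}. Verifiability, if we also want to record it, is Theorem~\ref{thm:verifiable}. So the statement reduces to bounding $|q_j|$ and $|a_j|$ for each $j\in[\ell]$ and summing as in Definition~\ref{communication complexity}.

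For the query side, ${\sf Que}$ samples $\beta\in\mathbb{R}^*$ and runs ${\sf Gen}(1^\lambda,f_{\alpha,\beta})$ once. Each server receives $q_j=k_j$, a single key of length $L_{\mathsf{DPF}}$. The value $\beta$ stays in ${\sf aux}$ and is never transmitted, so it is not counted. Summing over $\ell$ servers gives query complexity $\ell\cdot L_{\mathsf{DPF}}$. This is the point to stress relative to Fig.~\ref{fig:APIR}, where each $q_j$ consists of two keys.

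For the answer side, each $a_j=\sum_{i\in[n]}x_i\cdot{\sf Eval}_j(k_j,i)$ is a sum of elements of $\mathbb{Z}_{p^\tau}$, reduced mod $p^\tau$ using the ring structure of $\mathbb{R}$. Hence $a_j$ is one ring element, encodable in $\lceil\log p^\tau\rceil=\lceil\tau\log p\rceil$ bits. Over $\ell$ servers this gives $\ell\cdot\tau\log p$ bits, up to the usual ceiling, which the statement suppresses.

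There is no genuine obstacle. The only subtle point is making explicit that multiplying by $x_i\in\{0,1\}$ and summing stays inside $\mathbb{Z}_{p^\tau}$, so the answer length is independent of $n$, and that ${\sf aux}$ is local to the client and excluded from the communication count.
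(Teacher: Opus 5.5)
Your proposal is correct and follows the paper's proof. Instantiate the construction of Fig.~\ref{fig:construction}, note that each $q_j$ is a single key of length $L_{\mathsf{DPF}}$ and each $a_j$ is one element of $\mathbb{Z}_{p^\tau}$ needing $\tau\log p$ bits, and sum over the $\ell$ servers; your extra remarks only make explicit what the paper leaves implicit (the appeal to Theorems~\ref{thm:correctness} and~\ref{thm:privacy}, that ${\sf aux}$ is not transmitted, and the ceiling on $\tau\log p$).
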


\begin{proof}
    The query complexity of $\Gamma$ is the total length of all queries $\{q_j\}_{j\in [\ell]}$. Each query $q_j$ is exactly a single itDPF key $k_j$ of length $L_{\mathsf{DPF}}$, so the total query complexity is $\ell \cdot L_{\mathsf{DPF}}$.

    The answer complexity is the total length of all answers $\{a_j\}_{j\in [\ell]}$. Each answer $a_j$ is an element of $\mathbb{G} = (\mathbb{Z}_{p^\tau}, +)$, which requires $\tau \log_2 p$ bits to represent. Thus, the total answer complexity is $\ell \cdot \tau \log_2 p$.
\end{proof}

By applying any of the $\ell$-server $t$-private itDPF schemes listed in Table~\ref{tab:existing_itdpf}, the following concrete instantiations are obtained:
\begin{corollary} \label{coro:1}
    Given a security parameter $\lambda \in \mathbb{N}$ and database size $n \in \mathbb{N}$, there exist a $3$-server statistical $1$-private $(1, \frac{1}{p-1})$-verifiable itED-PIR scheme for binary databases with the communication complexities 
    $O\big( \lambda \log p \cdot 2^{c(p)\sqrt{\log n \log \log n}}\big)$ (constructed from \cite{boyle2023information}, Theorem $9$)
    and a $4$-server statistical $1$-private $(1, \frac{1}{p-1})$-verifiable itED-PIR scheme for binary databases with the communication complexities
    $O\big(\lambda \cdot 2^{10\sqrt{\log n \log \log n}} + \lambda \log p\big)$ (constructed from \cite{li2025efficient}, Theorem $11$).

    Here, $p$ is a prime, with $c(2)=6$, $c(3)=10$, and $c(p)=2p$ for any $p \geq 5$. Both schemes provide statistical $1$-privacy.
\end{corollary}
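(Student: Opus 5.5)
The plan is to instantiate the generic construction of Fig.~\ref{fig:construction} with $\tau=1$, so that $\mathbb{R}=\mathbb{R}_p=(\mathbb{Z}_p,+,\cdot)$ and $\mathbb{G}=(\mathbb{Z}_p,+)$. We plug in the two statistically $1$-private itDPF schemes with output group $\mathbb{Z}_p$ from Table~\ref{tab:existing_itdpf}: the $3$-server scheme of \cite{boyle2023information} (Theorem 9) and the $4$-server scheme of \cite{li2025efficient} (Theorem 11). Each is correct and statistically $1$-private, so the three required properties follow from the theorems already stated.

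\textbf{Correctness, privacy and verifiability.} Correctness of $\Gamma$ follows from Theorem~\ref{thm:correctness}. Statistical $1$-privacy follows from the statistical part of Theorem~\ref{thm:privacy} with $t=1$. For verifiability, Theorem~\ref{thm:verifiable} with $t=1$ gives $(1,1/|\mathbb{R}^*|)$-verifiability, and $|\mathbb{R}^*|=p^1-p^0=p-1$, so the bound is $\frac{1}{p-1}$.

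One point needs care, because Lemma~\ref{Lemma:verifiability} is stated for fixed offsets $\Delta_j$. An adversary might instead choose offsets that depend on its key $k_j$, which could be correlated with $\beta$. I would handle this by noting that under statistical privacy the single corrupted key is statistically independent of $\beta$. Therefore $\Pr[\beta=(1-2x_\alpha)^{-1}\Delta]$ is at most $\frac{1}{p-1}$ plus a term negligible in $\lambda$. This negligible slack is absorbed by the statistical setting, in the same way the paper treats Theorem~\ref{thm:verifiable}. This is the only step that is not purely mechanical, and I would state it as following from Theorem~\ref{thm:verifiable} exactly as the paper does.

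\textbf{Communication.} By the communication theorem preceding this corollary, the cost is $\ell\cdot L_{\mathsf{DPF}}+\ell\log p$ bits, with $\ell\in\{3,4\}$ a constant.
\begin{itemize}
\item For the $3$-server scheme, $L_{\mathsf{DPF}}=O(\lambda\log p\cdot 2^{c(p)\sqrt{\log n\log\log n}})$. The answer term $O(\log p)$ is dominated by this, giving $O(\lambda\log p\cdot 2^{c(p)\sqrt{\log n\log\log n}})$ with the stated $c(p)$.
\item For the $4$-server scheme, $L_{\mathsf{DPF}}=O(\lambda\cdot 2^{10\sqrt{\log n\log\log n}}+\lambda\log p)$. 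Adding $4\log p$ keeps the total at $O(\lambda\cdot 2^{10\sqrt{\log n\log\log n}}+\lambda\log p)$.
\end{itemize}
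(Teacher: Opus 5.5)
Your proposal is correct and follows the paper's route: the paper obtains the corollary by plugging the two statistically $1$-private $\mathbb{Z}_p$-output itDPFs of Table~\ref{tab:existing_itdpf} into Fig.~\ref{fig:construction} with $\tau=1$, then reading off Theorems~\ref{thm:correctness}, \ref{thm:privacy}, \ref{thm:verifiable} and the communication theorem with $|\mathbb{R}^*|=p-1$. Your remark on key-dependent offsets is a real point the paper skips, since Lemma~\ref{Lemma:verifiability} fixes $\Delta$, and you are right that statistical privacy only yields $\frac{1}{p-1}+\mathrm{negl}(\lambda)$; Definition~\ref{def:edpir-verifiability} does not absorb that slack, so the exact $\frac{1}{p-1}$ is inherited from Theorem~\ref{thm:verifiable} rather than actually established by your argument.
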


\begin{corollary} \label{coro:2}
    Given database size $n \in \mathbb{N}$, there exists a $4$-server perfect $1$-private $(1, \frac{1}{p^\tau - p^{\tau-1}})$-verifiable itED-PIR scheme for binary databases with communication complexity
       $ O\big(\tau \log p \cdot 2^{c(p)\sqrt{\log n \log \log n}}\big)$ (constructed from \cite{boyle2023information} Theorem $5$ and \cite{li2025efficient} Theorem $6$).
    
    Here, $p$ is a prime, with $c(2)=6$ and $c(p)=2p$ for any $p \geq 3$. The scheme provides perfect $1$-privacy. 
\end{corollary}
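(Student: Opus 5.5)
The plan is to obtain Corollary~\ref{coro:2} by instantiating the generic construction $\Gamma$ of Fig.~\ref{fig:construction} with the perfectly $1$-private $4$-server itDPFs over $\mathbb{G}=(\mathbb{Z}_{p^\tau},+)$ listed in Table~\ref{tab:existing_itdpf}. I will split into two cases according to the prime $p$. For $p \geq 3$, I take $\Pi$ to be the scheme of \cite{boyle2023information}, Theorem~5. For $p=2$, I take $\Pi$ to be the scheme of \cite{li2025efficient}, Theorem~6. In both cases the data ring is $\mathbb{R}=(\mathbb{Z}_{p^\tau},+,\cdot)$, whose additive group is exactly the output group of $\Pi$. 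Hence the construction in Fig.~\ref{fig:construction} applies verbatim.

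Next, I transfer each property through the theorems already established.
\begin{itemize}
\item \emph{Correctness.} This follows from Theorem~\ref{thm:correctness}, since both itDPFs are correct.
\item \emph{Perfect $1$-privacy.} This follows from Theorem~\ref{thm:privacy}, since both itDPFs are perfectly $1$-private. No security parameter $\lambda$ is needed, which is consistent with the statement omitting $\lambda$.
\item \emph{Verifiability.} Theorem~\ref{thm:verifiable} with $t=1$ gives $(1,1/|\mathbb{R}^*|)$-verifiability. Euler's totient gives $|\mathbb{R}^*|=\varphi(p^\tau)=p^\tau-p^{\tau-1}$, which is the claimed bound.
\end{itemize}

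For communication, I invoke the communication theorem stated just before Corollary~\ref{coro:1}. It gives total cost $\ell\cdot L_{\mathsf{DPF}}+\ell\cdot\tau\log p$ with $\ell=4$. Substituting the key sizes from Table~\ref{tab:existing_itdpf}:
\begin{itemize}
\item for $p\geq 3$, $L_{\mathsf{DPF}}=O\bigl(\tau\log p\cdot 2^{2p\sqrt{\log n\log\log n}}\bigr)$;
\item for $p=2$, $L_{\mathsf{DPF}}=O\bigl(\tau\log p\cdot 2^{6\sqrt{\log n\log\log n}}\bigr)$.
\end{itemize}
With $c(2)=6$ and $c(p)=2p$ for $p\geq 3$, both cases read $O\bigl(\tau\log p\cdot 2^{c(p)\sqrt{\log n\log\log n}}\bigr)$. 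The answer term $4\tau\log p$ is dominated by this, and the constant factor $4$ is absorbed.

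There is no real obstacle, since everything is bookkeeping. The only point needing care is checking that the cited itDPFs genuinely have output group $\mathbb{Z}_{p^\tau}$ for the full range of $p$, so that the two cases $p=2$ and $p\geq 3$ together cover every prime. I also need to confirm that their privacy is perfect rather than statistical, so that the perfect branch of Theorem~\ref{thm:privacy} is the one that applies.
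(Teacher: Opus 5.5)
Your proposal is correct and matches the paper's argument: the paper obtains the corollary by instantiating the construction of Fig.~\ref{fig:construction} with the perfectly $1$-private $4$-server itDPFs over $(\mathbb{Z}_{p^\tau},+)$ from Table~\ref{tab:existing_itdpf}, one for $p\geq 3$ and one for $p=2$ (exactly your case split), and reading off correctness, perfect privacy, $\bigl(1,\frac{1}{p^\tau-p^{\tau-1}}\bigr)$-verifiability and the $O\bigl(\tau\log p\cdot 2^{c(p)\sqrt{\log n\log\log n}}\bigr)$ cost from the preceding theorems. One point you could make explicit: Lemma~\ref{Lemma:verifiability} bounds the error only for fixed offsets, and it is perfect $1$-privacy that makes the corrupted server's key, hence its offset $\Delta$, independent of $\beta$, so the bound also covers adaptive adversaries.
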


\begin{corollary} \label{coro:3}
    Given database size $n \in \mathbb{N}$, there exists a $8$-server perfect $1$-private $(1, \frac{1}{p-1})$-verifiable itED-PIR scheme for binary databases with communication complexity
       $ O\big(2^{10\sqrt{\log n \log \log n}} +\log p\big)$ (constructed from \cite{li2025efficient} Theorem $8$).
    
    Here, $p$ is a prime. The scheme provides perfect $1$-privacy. 
\end{corollary}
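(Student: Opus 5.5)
The plan is to instantiate the generic construction of Fig.~\ref{fig:construction} with the $8$-server perfectly $1$-private itDPF of \cite{li2025efficient}, Theorem~8 (Table~\ref{tab:existing_itdpf}). That itDPF has output group $\mathbb{G}=(\mathbb{Z}_p,+)$ and key size $L_{\mathsf{DPF}} = O\bigl(2^{10\sqrt{\log n\log\log n}}+\log p\bigr)$. Accordingly, we take $\tau=1$, so $\mathbb{R}=(\mathbb{Z}_p,+,\cdot)$ is the prime field and $\mathbb{R}^*=\mathbb{Z}_p\setminus\{0\}$. The output group of the itDPF is then exactly the additive group of $\mathbb{R}$, which is the only compatibility requirement the construction imposes. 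Note also that no security parameter $\lambda$ is needed, since the privacy is perfect.

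First, correctness follows from Theorem~\ref{thm:correctness}, because the itDPF is correct. Second, perfect $1$-privacy follows from Theorem~\ref{thm:privacy} applied with $t=1$. One small point should be checked here: the itDPF privacy definition quantifies over all $\beta_0,\beta_1\in\mathbb{G}$, so in particular the server view is independent of the uniformly sampled $\beta\in\mathbb{R}^*$ and of $\alpha$. Third, verifiability comes from Theorem~\ref{thm:verifiable} with $|\mathbb{R}^*| = p^1-p^0 = p-1$, which yields $\bigl(1,\frac{1}{p-1}\bigr)$-verifiability.

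For the communication, we invoke the preceding complexity theorem with $\ell=8$ and $\tau=1$. The query cost is $8\cdot L_{\mathsf{DPF}} = O\bigl(2^{10\sqrt{\log n\log\log n}}+\log p\bigr)$, and the answer cost is $8\log p = O(\log p)$. Summing the two gives the claimed bound $O\bigl(2^{10\sqrt{\log n\log\log n}}+\log p\bigr)$, since the constant factor $8$ is absorbed.

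The only step that is not purely mechanical is the bookkeeping in the complexity bound. The key size from \cite{li2025efficient}, Theorem~8, must be additive in $\log p$ rather than multiplicative, so that the $\log p$ answer term merges with it and the final bound does not pick up an extra $\log p$ factor on the subexponential term. I would check this directly against the stated key size. Everything else is a direct substitution into Theorems~\ref{thm:correctness}, \ref{thm:privacy} and \ref{thm:verifiable}.
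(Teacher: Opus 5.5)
Your proposal is correct and follows the paper's route exactly. You instantiate Fig.~\ref{fig:construction} with the itDPF of \cite{li2025efficient}, Theorem~8, at $\tau=1$ (so $|\mathbb{R}^*|=p-1$), read off correctness, perfect $1$-privacy and $\bigl(1,\frac{1}{p-1}\bigr)$-verifiability from Theorems~\ref{thm:correctness}--\ref{thm:verifiable}, and get the $8\cdot L_{\mathsf{DPF}}+8\log p$ bound from the communication-complexity theorem. Your remark that the coalition's view is independent of $\beta$ under perfect privacy is also the fact that lets the fixed-offset Lemma~\ref{Lemma:verifiability} cover an adversary who chooses its offsets after seeing its keys, a step the paper leaves implicit.
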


\begin{corollary}  \label{coro:4}
    Given database size $n \in \mathbb{N}$, there exists a $d(t +1)$-server perfect $t$-private $(t, \frac{1}{p-1})$-verifiable itED-PIR scheme for binary databases with communication complexity
       $ O\big( \log p \cdot n^{1/ \lfloor (2d+1)/t \rfloor }\big)$ (constructed from \cite{li2025efficient} Theorem $7$).
    
    Here, $d,t$ are positive integers, $p$ is a prime. The scheme provides perfect $t$-privacy. 
\end{corollary}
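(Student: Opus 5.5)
The statement to prove is Corollary~\ref{coro:4}. The plan is to instantiate the generic construction of Fig.~\ref{fig:construction} with the $d(t+1)$-server perfectly $t$-private itDPF of \cite{li2025efficient}, Theorem~7. That itDPF has output group $\mathbb{G}=(\mathbb{Z}_p,+)$ for any prime $p\geq 2$, so we take $\tau=1$ and the ring $\mathbb{R}=(\mathbb{Z}_p,+,\cdot)$. We then read off each required property from the theorems already proved for $\Gamma$.

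First, correctness of $\Gamma$ follows from Theorem~\ref{thm:correctness}, since the itDPF is correct. Second, perfect $t$-privacy follows from Theorem~\ref{thm:privacy}, since the itDPF is perfectly $t$-private. Here one should note that the coalition's view is a set of keys $\{k_j\}_{j\in T}$ for $f_{\alpha,\beta}$ with $\beta$ drawn from $\mathbb{R}^*$. Because the key distributions coincide for every pair $(\alpha_0,\beta_0),(\alpha_1,\beta_1)$, they also coincide after averaging over $\beta$.

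Third, $(t,\frac{1}{p-1})$-verifiability comes from Theorem~\ref{thm:verifiable} with $|\mathbb{R}^*|=p^1-p^0=p-1$. One point needs care here, and it is the step I would check most closely. Lemma~\ref{Lemma:verifiability} is stated for fixed offsets $\Delta_j$, whereas the adversary in $\mathsf{EXP}^{\rm Ver}$ picks its responses after seeing $\{k_j\}_{j\in V}$. To bridge this, I would condition on the coalition's view. By perfect $t$-privacy, the coalition's keys are independent of $\beta$, so conditioned on that view $\beta$ is still uniform on $\mathbb{R}^*$. The induced offset $\Delta=\sum_{j\in V}(\hat a_j-a_j)$ is then a function of the view and $\mathbf{x}$ alone. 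If $\Delta=0$, the output is $x_\alpha$ and the adversary loses. If $\Delta\neq 0$, the argument of Lemma~\ref{Lemma:verifiability} bounds the success probability by $1/(p-1)$. Averaging over views gives the overall bound.

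Finally, for communication we apply the complexity theorem above. The query cost is $\ell\cdot L_{\mathsf{DPF}}$, where $L_{\mathsf{DPF}}=O(\log p\cdot n^{1/\lfloor(2d+1)/t\rfloor})$ by Table~\ref{tab:existing_itdpf}. The answer cost is $\ell\log p$. With $\ell=d(t+1)$ treated as a constant, the total is $O(\log p\cdot n^{1/\lfloor(2d+1)/t\rfloor})$, which is the claimed bound.
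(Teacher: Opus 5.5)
Your proposal is correct and takes essentially the paper's route: like the paper, you plug the $d(t+1)$-server perfectly $t$-private itDPF of \cite{li2025efficient}, Theorem~7 (so $\tau=1$ and $|\mathbb{R}^*|=p-1$) into the generic construction, then read off correctness, privacy, verifiability and communication from the preceding theorems. Your step of conditioning on the coalition's view to handle the adaptive adversary is a sound addition, since it makes explicit what the paper compresses into ``follows directly from Lemma~\ref{Lemma:verifiability}'', even though that lemma is stated only for fixed offsets.
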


\subsection{Discussion} \label{sec:discussion}

\paragraph{Advantages compared with itAPIR}
\begin{itemize}
    \item The itAPIR scheme employs $2$ itDPF keys in its query algorithm, whereas the itED-PIR scheme proposed in this work only uses 1 itDPF key. For certain itDPF schemes where the key size is asymptotic to a constant relative to the output group size (e.g., \cite{li2025efficient}, Theorem $8$), the itED-PIR scheme constructed upon such itDPF schemes halves the query communication complexity compared with the itAPIR scheme.

    \item The itAPIR scheme stores data over a finite field, while the itED-PIR scheme in this work is built over a ring. The advantage of a ring over a finite field is that there exist rings whose corresponding additive groups are $(\mathbb{Z}_{p^\tau},+)$ for $\tau \geq 2$, whereas no such finite fields exist. 
    
    Taking the construction of a 4-server perfect 1-private $(1, \epsilon)$-verifiable scheme as an example, Corollary \ref{coro:2} allows the selection of parameters $p=2$ and $\tau = -\log_2 \epsilon$, yielding an itED-PIR scheme with communication complexity $O\bigl(\tau \log p \cdot 2^{6\sqrt{\log n \log \log n}}\bigr)= O\bigl(\frac{1}{\epsilon} \cdot 2^{6\sqrt{\log n \log\log n}}\bigr)$. In contrast, the itAPIR scheme can only be constructed by choosing a prime $p > 1/\epsilon$ with $\tau = 1$, leading to a scheme with communication complexity $O\bigl(\tau \log p \cdot 2^{2p\sqrt{\log n \log\log n}}\bigr)$. 
    For the security level of $\epsilon=2^{-128}$ (a common choice for high-security applications), this translates to concrete parameter values: the itED-PIR scheme uses $p=2$ and $\tau=128$, while the itAPIR scheme requires a prime $p>2^{128}$ with $\tau=1$. The exponential term of the itAPIR scheme thus becomes $2^{2\cdot2^{128}\cdot\sqrt{\log n \log\log n}}$, whose growth is completely intractable for practical database sizes $n$.
    In practice, $\epsilon$ is typically set to $2^{-40}, 2^{-60}, 2^{-80},2^{-128}$ for different security levels, and the asymptotic communication complexity of the itAPIR scheme becomes prohibitive under these choices of $\epsilon$.
\end{itemize}

\paragraph{Resilience Against Cyberattacks}
The proposed itED-PIR scheme exhibits strong resilience against various cyberattacks, as elaborated below:
\begin{itemize}
    \item {\bf Eavesdropping Attacks} 
    All queries are transmitted in a privacy-preserving manner. The $t$-privacy guarantee ensures that eavesdroppers who intercept no more than $t$ queries cannot extract any meaningful information about the client’s query index.
    \item {\bf Collusion Attacks}  
    The $t$-private property of the scheme ensures that up to $t$ colluding servers cannot learn any non-negligible information about the client’s retrieval index.
    \item {\bf Man-in-the-Middle Attacks}
    The verifiable property enables the client to identify the presence of malicious servers—or more precisely, incorrect responses. Its $(t,\epsilon)$-verifiable guarantee ensures that the success probability of man-in-the-middle attacks, launched by adversaries who control no more than $t$ servers and tamper with their responses, is bounded by $\epsilon$.
\end{itemize}

As the proposed scheme achieves information-theoretic security, it offers distinct advantages over computationally secure alternatives. Unlike computationally secure schemes that rely on unproven computational hardness assumptions (e.g., lattice-based, number-theoretic, or symmetric cryptography assumptions), the security of the itED-PIR scheme does not depend on any limitations of an adversary’s computational power. This advantage makes the scheme naturally resilient to quantum threats, which could potentially break most computationally secure cryptosystems in the future.

{
\paragraph{Limitations}  
Despite the significant advantages, this study has several limitations that merit attention. 
First, the proposed scheme assumes an honest-but-curious client model, which may not cover all real-world adversarial scenarios where the client could be malicious. 
Second, this work focuses on the theoretical design and information-theoretic security analysis, and lacks a formal treatment of computationally secure ED-PIR schemes derived from computationally secure DPFs—a gap that also exists in the prior APIR work \citep{colombo2023authenticated}.

\paragraph{Future Work} 
Building on the limitations identified above, several future research directions can be explored. 
First, extend the security model to accommodate malicious clients, and formally analyze the scheme’s resilience against adversarial clients that may deviate from the protocol specification. 
Second, fill the theoretical gap by constructing and formally proving the security of computationally secure variants of the proposed ED-PIR scheme, which are derived from computationally secure DPFs.

Although practical implementation is not the core focus of this study, the following strategies are proposed for future practical deployment.
First, future work shall develop optimized implementation frameworks tailored to mainstream distributed storage systems (e.g., Hadoop, IPFS). The concrete instantiations presented in this work (Corollaries $1$–$4$) are designed for specific parameter settings, which may not directly satisfy all practical deployment requirements. Thus, adaptive parameter tuning and optimization are needed to balance security, efficiency, and resource constraints in real-world scenarios.
Second, in practical deployments, database sharding can be employed: by selecting an appropriate entry size to partition the database into smaller chunks, the number of entries per query can be reduced, which significantly improves both computational and communication efficiency.
}

\section{Conclusion}
This paper presents a novel $t$-private $\ell$-server information-theoretic error-detecting Private Information Retrieval (itED-PIR) scheme constructed over the ring $\mathbb{R}=(\mathbb{Z}_{p^\tau},+,\cdot)$ and instantiated from information-theoretic Distributed Point Functions (itDPFs), overcoming the finite field constraint of state-of-the-art information-theoretic authenticated Private Information Retrieval (itAPIR) schemes and enabling native compatibility with prime-power-order output groups $\mathbb{G}=(\mathbb{Z}_{p^\tau},+)$ for $\tau\geq2$. 
Distinguished from itAPIR's dual-key design, this work uses a single itDPF key per server, halving the query-side communication complexity.
The ring structure in this work achieves high efficiency than the finite field structure in itAPIR. 
Rooted in information-theoretic security, this work achieves inherent quantum resistance and robust defense against collusion, tampering, and eavesdropping attacks.

\section*{Acknowledgements}
The authors thank the anonymous reviewers for their helpful comments.


\begin{appendices}

\section{Generalized itED-PIR for Databases with $m$-bit Entries} \label{sec:mbit}
    \begin{figure*}[h] 
	\begin{boxedminipage}{\textwidth}
        \underline{Notations}
        \begin{itemize}
            \item $\ell$: The total number of servers.
            \item $t$: The number of servers that may collude to learn the retrieval index. 
            \item ${\bf x} = (x_1,\ldots,x_n)$: The binary database, which is a vector in $(\{0,1\}^{m})^n$.
            \item $\alpha$: The client's retrieval index.
            \item $\mathbb{R}$: The ring $\mathbb{R} = (\mathbb{Z}_{p^\tau},+,\cdot)$ (where $p$ is a prime and $\tau$ is a positive integer).
            \item $\mathbb{G}$: The additive group of the ring $\mathbb{R}$, $\mathbb{G} = (\mathbb{Z}_{p^{\tau}}, +)$.
            \item $\Pi = ({\sf Gen}, \{ {\sf Eval}_j\}_{j\in [\ell]})$: A $t$-private $\ell$-server itDPF scheme with output group $\mathbb{G}$.
        \end{itemize}
                
        \underline{$ ( \{q_j\}_{j\in [\ell]}, {\sf aux} ) \leftarrow {\sf Que}(1^{\lambda}, n,\alpha)$}
        \begin{itemize}
		\item[1.]
            Uniformly sample an element $\beta \in \mathbb{R}^*$. Generate DPF keys via $\{k_j\}_{j \in [\ell]} \leftarrow {\sf Gen}(1^\lambda, f_{\alpha, \beta})$, where $f_{\alpha, \beta}$ is the point function. For each $j \in [\ell]$, set query $q_j = k_j$.
            
            \item[2.] 
            Output queries $\{q_j\}_{j \in [\ell]}$ and auxiliary information ${\sf aux} = \beta$.

        \end{itemize}
        
        \underline{$ a_j \leftarrow {\sf Ans}({\bf x}, q_j,j)$}
        \begin{itemize}
                \item[1.]
                Evaluate the answer
                $
                a_j = \sum_{i \in [n]} x_{i} \cdot  {\sf Eval}_j (q_j,i),
                $
			and output $a_j$.
        \end{itemize}

        \underline{${\sf res} \leftarrow {\sf Rec}(\{a_j\}_{j \in [\ell]}, {\sf aux})$}      
        
        \begin{itemize}      
            \item[1.] 
            Compute the aggregate result	
            $
            R_{\text{agg}}=\sum_{j\in [\ell]} a_{j}.
            $

            \item[2.]
            Extract $\beta$ from ${\sf aux}$. If $\beta^{-1} \cdot R_{\text{agg}} \in \{0,1\}$, output $\beta^{-1} \cdot R_{\text{agg}}$, otherwise output $\perp$.
		\end{itemize}
	\end{boxedminipage}
	\caption{$\ell$-Server itED-PIR scheme $\Gamma'$ for $t$-private itDPF $\Pi$ with output group $\mathbb{G} =  (\mathbb{Z}_{p^{\tau}},+)$. This construction targets $m$-bit data entry for arbitrary $m$.}
	\label{fig:mbit construction}
\end{figure*}

The itED-PIR scheme $\Gamma$ presented in Fig. \ref{fig:construction} is directly applicable to databases with $m$-bit data entries for arbitrary positive integer $m$ without any modification. 
Specifically, the correctness and (perfect or statistical) $t$-privacy properties of the scheme remain fully preserved without any changes. Only the verifiability parameter is updated: the original $(t, \frac{1}{|\mathbb{R}^\ast|})$-verifiability (Theorem \ref{thm:verifiable}) is generalized to $(t, \frac{2^m - 1}{|\mathbb{R}^\ast|})$-verifiability for $m$-bit data entry databases. The generalized construction for $m$-bit data entries and its property proof are given as follows.

    \begin{theorem} \label{thm:correctness2}
        The itED-PIR scheme $\Gamma'$ is correct (Definition \ref{def:edpir-correctness}) if the itDPF scheme $\Pi$ is correct (Definition \ref{def:itdpf-correctness}).
    \end{theorem}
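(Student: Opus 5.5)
The plan is to repeat the argument of Theorem~\ref{thm:correctness} almost verbatim. The one preliminary step is to fix how an $m$-bit entry $x_i\in\{0,1\}^m$ is interpreted inside the ring $\mathbb{R}=(\mathbb{Z}_{p^\tau},+,\cdot)$. I would identify $x_i$ with the integer $\sum_{k=0}^{m-1} x_{i,k}2^k\in\{0,\ldots,2^m-1\}$ and reduce it modulo $p^\tau$. This is implicitly needed for the expression $x_i\cdot\mathsf{Eval}_j(q_j,i)$ to make sense.

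I would assume $2^m\le p^\tau$, so this encoding is injective and $x_\alpha$ can be read back from its ring representative. Correspondingly, I read the acceptance set in $\mathsf{Rec}$ (written $\{0,1\}$ in Fig.~\ref{fig:mbit construction}) as the image $\{0,1,\ldots,2^m-1\}\subseteq\mathbb{R}$ of $\{0,1\}^m$. This matches the $(2^m-1)/|\mathbb{R}^*|$ verifiability bound announced in the appendix, since that bound counts $2^m-1$ wrong values.

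With that convention, the key steps are as follows.

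\textbf{First,} fix any $\lambda,n$, any database $\mathbf{x}\in(\{0,1\}^m)^n$, and any $\alpha$. Let $(\{q_j\},\mathsf{aux}=\beta)\leftarrow\mathsf{Que}(1^\lambda,n,\alpha)$ with $\beta\in\mathbb{R}^*$.

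\textbf{Second,} apply correctness of $\Pi$ (Definition~\ref{def:itdpf-correctness}). This gives $\sum_{j}\mathsf{Eval}_j(q_j,i)=f_{\alpha,\beta}(i)$ for every $i\in[n]$, with probability $1$.

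\textbf{Third,} exchange the order of summation, using distributivity in the commutative ring $\mathbb{R}$. This yields
\begin{equation*}
R_{\text{agg}}=\sum_{i\in[n]}x_i\sum_{j\in[\ell]}\mathsf{Eval}_j(q_j,i)=\sum_{i\in[n]}x_i f_{\alpha,\beta}(i)=\beta x_\alpha .
\end{equation*}

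\textbf{Fourth,} since $\beta$ is a unit, $\beta^{-1}R_{\text{agg}}=x_\alpha$. This value lies in the acceptance set, so $\mathsf{Rec}$ outputs it rather than $\perp$. Decoding gives back $x_\alpha$ by injectivity of the encoding, and this holds with probability $1$.

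There is no real technical obstacle; the algebra is identical to the binary case. The only delicate point is the interpretive one: stating the encoding of $m$-bit entries into $\mathbb{Z}_{p^\tau}$, and the requirement $p^\tau\ge 2^m$. Without that requirement, distinct entries could collide modulo $p^\tau$ and the output would not equal $x_\alpha$ as a bit string. I would state this requirement explicitly as a standing parameter condition for $\Gamma'$.
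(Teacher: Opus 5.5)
Your proposal is correct and follows the paper's proof, which simply reuses the argument of Theorem~\ref{thm:correctness}: apply DPF correctness, exchange the order of summation to get $R_{\text{agg}}=\beta x_\alpha$, then invert the unit $\beta$. Your explicit encoding of $m$-bit entries into $\mathbb{Z}_{p^\tau}$, the condition $2^m\le p^\tau$, and your reading of the acceptance set in Fig.~\ref{fig:mbit construction} as the image of $\{0,1\}^m$ rather than the literal $\{0,1\}$ (which is also how the proof of Lemma~\ref{Lemma:verifiability2} treats it) are legitimate clarifications that the paper leaves implicit.
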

    \begin{proof}
        The proof is the same as the proof of Theorem \ref{thm:correctness}.
    \end{proof}

    \begin{theorem} \label{thm:privacy2}
        The itED-PIR scheme $\Gamma$ is perfect $t$-private (Definition \ref{def:itedpir-perfect-tprivacy}) if the underlying itDPF scheme $\Pi$ is perfect $t$-private (Definition \ref{def:itdpf-perfect-privacy}). Similarly, $\Gamma$ is statistical $t$-private (Definition \ref{def:itedpir-statistical-tprivacy}) if $\Pi$ is statistical $t$-private (Definition \ref{def:itdpf-statistical-privacy}).
    \end{theorem}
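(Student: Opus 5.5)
The plan is to reduce the statement to the itDPF privacy guarantee, exactly as in the proof of Theorem \ref{thm:privacy}. The key observation is that the querying algorithm of $\Gamma'$ in Fig.~\ref{fig:mbit construction} is literally identical to that of $\Gamma$ in Fig.~\ref{fig:construction}. It samples $\beta$ uniformly from $\mathbb{R}^*$, runs $\mathsf{Gen}(1^\lambda, f_{\alpha,\beta})$, and sets $q_j = k_j$. The only change relative to the binary case is the content of the database, and the database never enters $\mathsf{Que}$. So the view of any coalition $T\subseteq[\ell]$ with $|T|\le t$ is just $\{k_j\}_{j\in T}$, a random variable depending only on $(\alpha,\beta)$ and the coins of $\mathsf{Gen}$.

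First I fix $\lambda$, $n$, indices $\alpha_0,\alpha_1\in[n]$ and $T$ with $|T|\le t$. I then write the distribution of $\mathsf{Que}_T(1^\lambda,n,\alpha_b)$ as a mixture over $\beta$:
\[
\Pr[\mathsf{Que}_T(1^\lambda,n,\alpha_b)=K]=\frac{1}{|\mathbb{R}^*|}\sum_{\beta\in\mathbb{R}^*}\Pr[\mathsf{Gen}_T(1^\lambda,f_{\alpha_b,\beta})=K].
\]
By Definition~\ref{def:itdpf-perfect-privacy}, applied with the pairs $(\alpha_0,\beta)$ and $(\alpha_1,\beta)$, each summand is the same for $b=0$ and $b=1$. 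Hence the mixtures coincide, and the two query distributions are identical. This gives perfect $t$-privacy in the sense of Definition~\ref{def:itedpir-perfect-tprivacy}.

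For the statistical case, the same mixture decomposition works, with the triangle inequality for statistical distance in place of equality. The distance between the two mixtures is at most the average over $\beta$ of the distances $\mathrm{SD}(\mathsf{Gen}_T(1^\lambda,f_{\alpha_0,\beta}),\mathsf{Gen}_T(1^\lambda,f_{\alpha_1,\beta}))$. Each of these is negligible in $\lambda$ by Definition~\ref{def:itdpf-statistical-privacy}.

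The only point needing care is uniformity. Negligibility must hold uniformly over the choice of $\beta$, since a negligible function that depends on $\beta$ would not obviously survive the averaging. I would handle this by taking the maximum of the $|\mathbb{R}^*|$ distance functions, which is finite for fixed $p,\tau$ and hence still negligible. Alternatively, I would read the definition as quantifying over all $\beta_0,\beta_1$ simultaneously with a single negligible bound. Otherwise the argument is a direct transcription of the proof of Theorem~\ref{thm:privacy}.
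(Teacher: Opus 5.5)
Your proposal is correct and takes the paper's route. The paper simply defers to the proof of Theorem~\ref{thm:privacy}, which notes that a coalition's view is exactly $\{k_j\}_{j\in T}$ and invokes itDPF $t$-privacy; your explicit mixture over $\beta$ and your uniformity remark for the statistical case spell out details that the paper leaves implicit.
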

    \begin{proof}
        The proof is the same as the proof of Theorem \ref{thm:privacy}.
    \end{proof}

    \begin{lemma} \label{Lemma:verifiability2}
        Let \(\Gamma'\) be the \(\ell\)-server \(t\)-private itED-PIR scheme introduced in {\bf Fig.} \ref{fig:mbit construction}. For any database size \(n \in \mathbb{N}\), any database \({\bf x} = (x_1, \ldots, x_n) \in (\{0,1\}^m)^n\), any index \(\alpha \in [n]\), any subset \(T \subseteq [\ell]\) with \(|T| \leq t\), and any non-zero offsets \(\{\Delta_j\}_{j \in [\ell]}\) (where \(\Delta_j = 0\) for \(j \notin T\) and \(\Delta = \sum_{j \in [\ell]} \Delta_j \neq 0\)), the following bound holds:
        \begin{equation}
        \Pr\left[ 
        \begin{aligned}    
        y \in \{0,1\} \setminus \{x_\alpha\} :&  \\
        (\{q_j\}_{j \in [\ell]}, {\sf aux}) \leftarrow& {\sf Que}(1^\lambda, n, \alpha) \\
        a_j \leftarrow& {\sf Ans}({\bf x}, q_j, j) \quad \forall j \in [\ell] \\
        a'_j =& a_j + \Delta_j \quad \forall j \in [\ell] \\
        y \leftarrow& {\sf Rec}({\sf aux}, \{a'_j\}_{j \in [\ell]})
        \end{aligned}
        \right]
        \leq \frac{2^m - 1}{|\mathbb{R}^*|},
        \end{equation}
        where \(\mathbb{R} = (\mathbb{Z}_{p^\tau}, +, \cdot)\) is the underlying ring, \(\mathbb{R}^*\) denotes its multiplicative group (with order \(|\mathbb{R}^*| = p^\tau - p^{\tau-1}\) by Euler's totient function), and the probability is taken over all random coins used by the algorithms.
    \end{lemma}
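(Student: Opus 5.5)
The plan is to follow the proof of Lemma \ref{Lemma:verifiability}. First I note that for $m$-bit entries the acceptance set in $\mathsf{Rec}$ must be read as $S=\{0,1,\ldots,2^m-1\}\subseteq\mathbb{Z}_{p^\tau}$ (entries embedded as integers), so the event is $y\in S\setminus\{x_\alpha\}$. I will assume $2^m\le p^\tau$, so that distinct elements of $S$ stay distinct in the ring. Write $u=\beta^{-1}$; since $\beta$ is uniform on $\mathbb{R}^*$, $u$ is also uniform on $\mathbb{R}^*$. By Theorem \ref{thm:correctness2}, $uR_{\text{agg}}=x_\alpha$. Hence false acceptance happens iff $u\Delta = d$ for some
\[
d\in D:=\{\,y-x_\alpha : y\in S,\ y\neq x_\alpha\,\}.
\]
Viewed as integers, $D$ consists of the $2^m-1$ nonzero integers in the window $[-x_\alpha,\,2^m-1-x_\alpha]$.

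Next I count the units $u$ with $u\Delta\in D$. In the binary case every $d=\pm1$ is a unit, so each $d$ has at most one solution. In general, however, $d$ (and $\Delta$) may be non-units, and this is the main obstacle: a naive union bound over $d$ fails.

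To handle it, let $k=v_p(\Delta)<\tau$, which is well defined since $\Delta\neq0$. Then $u\Delta$ has valuation exactly $k$ for every unit $u$, so only those $d\in D$ that are divisible by $p^k$ can occur. For such a $d$, the equation $u\Delta=d$ fixes $u$ modulo $p^{\tau-k}$. It therefore has at most $p^k$ solutions in $\mathbb{R}^*$ (the lifts of a single residue).

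Then I bound the number of nonzero multiples of $p^k$ in $D$. Split $D$ into its positive part, of length $a=2^m-1-x_\alpha$, and its negative part, of length $b=x_\alpha$, so that $a+b=2^m-1$. The number of multiples of $p^k$ in $D$ is at most
\[
\lfloor a/p^k\rfloor+\lfloor b/p^k\rfloor\le \frac{2^m-1}{p^k}.
\]
Multiplying by the at most $p^k$ solutions per admissible $d$ gives at most $2^m-1$ bad units $u$. Since $u$ is uniform on $\mathbb{R}^*$, the probability is at most $(2^m-1)/|\mathbb{R}^*|$, with $|\mathbb{R}^*|=p^\tau-p^{\tau-1}$ as in Lemma \ref{Lemma:verifiability}.

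Finally, the same argument specializes to the earlier lemma when $m=1$: there $D\subseteq\{\pm1\}$ and only $k=0$ contributes, recovering the bound $1/|\mathbb{R}^*|$.
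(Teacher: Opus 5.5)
Your proof is correct. It starts from the same reduction as the paper: false acceptance iff $\beta^{-1}\Delta$ lands in the difference set $D$, which is a union over $\hat x_\alpha\in\{0,1\}^m\setminus\{x_\alpha\}$. The counting step, however, is genuinely different, and it actually repairs the paper's.

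The paper bounds each term $\Pr[\Delta=\beta(\hat x_\alpha-x_\alpha)]$ by $1/|\mathbb{R}^*|$, claiming the linear constraint has at most one solution in $\mathbb{R}^*$. That holds only when $\hat x_\alpha-x_\alpha$ is a unit, which is automatic for $m=1$, for $\tau=1$, or for $p\ge 2^m$. It fails for small $p$ with $\tau\ge 2$. For example, in $\mathbb{Z}_8$ with $m=2$, $x_\alpha=0$, $\Delta=2$, the term $\hat x_\alpha=2$ is satisfied by both $\beta=1$ and $\beta=5$. That single term therefore has probability $1/2>1/4$.

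Your stratification by $k=v_p(\Delta)$ shows why the total still obeys the lemma:
differences with $p^k\nmid d$ contribute nothing;
each admissible $d$ contributes at most $p^k$ units;
there are at most $(2^m-1)/p^k$ admissible $d$.
This gives at most $2^m-1$ bad units in all. In the example only $d=2$ is admissible, giving $2\le 3$.

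So the paper's one-line union bound works only when all differences are invertible. Your valuation count is what the prime-power setting the paper advertises (e.g.\ $p=2$, large $\tau$) actually needs.

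One small point to state explicitly: the assumption $2^m\le p^\tau$ loses nothing. Otherwise $2^m-1\ge p^\tau>|\mathbb{R}^*|$, and the claimed bound exceeds $1$.
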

        
\begin{proof}
Let \(\beta\) be the element uniformly sampled from \(\mathbb{R}^*\), \(R_{\text{agg}} = \sum_{j \in [\ell]} a_j\) denote the honest aggregate response, and \(\Delta = \sum_{j \in [\ell]} \Delta_j \neq 0\) denote the aggregate tampering offset. By the correctness of \(\Gamma\) (Theorem \ref{thm:correctness}), it holds that:
\begin{equation}
\beta^{-1} \cdot R_{\text{agg}} = x_\alpha \in \{0,1\}^m.
\end{equation}

A ``false acceptance" occurs if and only if the reconstructed result lies in the verification set \(\{0,1\}^m\) but is not the true value \(x_\alpha\), which is equivalent to:
\[
\beta^{-1} \cdot \left(R_{\text{agg}} + \Delta \right) = \hat{x}_\alpha \in \{0,1\}^m \setminus \{x_{\alpha}\}.
\]

Therefore, 
\begin{equation}
    \begin{aligned}
        v 
        &= \Pr\left[\bigcup_{\hat{x}_{\alpha} \in \{0,1\}^m \setminus x_{\alpha}} \Delta = \beta \cdot (\hat{x}_{\alpha} - x_{\alpha})    \right] \\
        & \leq \sum_{\hat{x}_{\alpha} \in \{0,1\}^m \setminus x_{\alpha}} \Pr\left[\Delta = \beta \cdot (\hat{x}_{\alpha} - x_{\alpha})\right] 
    \end{aligned}
\end{equation}

For the multiplicative group \(\mathbb{R}^*\), the non-zero linear constraint has at most one solution. Since \(\beta\) is uniformly sampled from \(\mathbb{R}^*\), the probability that \(\beta\) is exactly the solution to the constraint is bounded by the number of solutions divided by the size of \(\mathbb{R}^*\).

Given that the number of solutions is at most 1 for each term, it follows that:
\begin{equation}
    v \leq \frac{2^m-1}{|\mathbb{R}^{*}|}
\end{equation}
\end{proof}

\begin{theorem} \label{thm:verifiable2}
    The itED-PIR scheme $\Gamma$ is \(\left(t, \frac{2^m - 1}{|\mathbb{R}^*|}\right)\)-verifiable if the underlying itDPF scheme $\Pi$ is $t$-private, where \(|\mathbb{R}^*|  = p^\tau - p^{\tau-1}\).
\end{theorem}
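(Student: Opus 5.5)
The plan mirrors the proof of Theorem \ref{thm:verifiable}: reduce the verification experiment $\mathsf{EXP}^{\rm Ver}_{\mathcal{A},\Gamma'}(n,{\bf x},\alpha,V)$ with $|V|\le t$ to the fixed-offset bound of Lemma \ref{Lemma:verifiability2}, and use $t$-privacy of $\Pi$ to show the adversary's offsets are effectively independent of $\beta$.

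\textbf{Step 1: rewrite the adversary's action as offsets.} Fix $\mathcal{A}$, $V$, ${\bf x}\in(\{0,1\}^m)^n$ (entries read as integers in $\{0,\dots,2^m-1\}$ embedded in $\mathbb{R}$, with $p^\tau\ge 2^m$ so distinct entries stay distinct) and $\alpha$. Since $\mathsf{Ans}$ is deterministic, the honest answers $a_j$ for $j\in V$ are functions of the keys $\{k_j\}_{j\in V}$, which the adversary holds. So I define $\Delta_j=\hat a_j-a_j$ for $j\in V$ and $\Delta_j=0$ otherwise; each $\Delta_j$ is a (possibly randomized) function of $\{k_j\}_{j\in V}$ alone. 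Set $\Delta=\sum_j\Delta_j$.

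\textbf{Step 2: case split on $\Delta$.} If $\Delta=0$, then $R_{\text{agg}}$ is unchanged and, by Theorem \ref{thm:correctness2}, $\mathsf{Rec}$ outputs $x_\alpha$, so the experiment outputs $0$. If $\Delta\neq0$, success requires $\Delta=\beta(\hat x-x_\alpha)$ for some $\hat x\in\{0,1\}^m\setminus\{x_\alpha\}$.

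\textbf{Step 3: decouple $\beta$ from $\Delta$ (main obstacle).} Lemma \ref{Lemma:verifiability2} assumes fixed offsets, but here $\Delta$ depends on the adversary's view. I would condition on that view $\{k_j\}_{j\in V}$. By perfect $t$-privacy of $\Pi$ (Definition \ref{def:itdpf-perfect-privacy}), the distribution of $\{k_j\}_{j\in V}$ is the same for every $\beta$, so $\beta$ is independent of the view and stays uniform on $\mathbb{R}^*$ after conditioning. Then $\Delta$ is a fixed value, and I apply the counting argument of Lemma \ref{Lemma:verifiability2}: for each of the $2^m-1$ choices of $\hat x$, at most one $\beta$ satisfies the equation, which gives conditional probability at most $(2^m-1)/|\mathbb{R}^*|$. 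Averaging over views gives the same bound.

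The delicate point is that "at most one solution" fails when $\hat x-x_\alpha$ is a zero divisor in $\mathbb{Z}_{p^\tau}$. If $\hat x-x_\alpha=p^s u$ with $u$ a unit, then $\beta p^s u=\Delta$ has about $p^s$ solutions in $\mathbb{R}^*$. My first move is to require $p^\tau\ge 2^{m+1}$ and $p>2^m$ (or $p=2$ with entries padded suitably) so that all nonzero differences are units; otherwise I would state the bound with the multiplicity factor.

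For the statistical case, the independence holds only up to the negligible statistical distance, so the final bound picks up an additive $\mathrm{negl}(\lambda)$ term.
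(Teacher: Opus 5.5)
\textbf{Gap: zero divisors.} You are right that the paper's step ``the non-zero linear constraint has at most one solution'' fails when $\hat x - x_\alpha$ is a non-unit. But your remedy, assuming $p>2^m$ or else inflating the bound by a multiplicity factor, does not prove the theorem as stated. It also drops exactly the case the paper relies on: $p=2$ with $\tau$ large (Corollary~\ref{coro:2} and the discussion). The fallback ``$p=2$ with entries padded suitably'' cannot rescue this. With at least three admissible entries, two have the same parity, so their difference is a non-unit in $\mathbb{Z}_{2^\tau}$ under any encoding.

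The rest of your plan is sound. Step~3 (conditioning on the coalition's keys and using perfect $t$-privacy to keep $\beta$ uniform on $\mathbb{R}^*$) is needed. The paper's ``follows directly from Lemma~\ref{Lemma:verifiability2}'' leaves it implicit, since that lemma only treats fixed offsets and never uses privacy. Your additive statistical-distance term in the statistical case is also correct.

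\textbf{Missing idea.} The multiplicity is exactly offset by how sparse the reachable targets are. So the stated bound holds for all $p,\tau$ with $2^m\le p^\tau$, which correctness already requires. After conditioning, write the fixed $\Delta\neq 0$ as $\Delta=p^s w$ with $w\in\mathbb{R}^*$ and $0\le s<\tau$. Since $\beta$ is a unit, $\beta(\hat x-x_\alpha)=\Delta$ has no solution unless $v_p(\hat x-x_\alpha)=s$. In that case, writing $\hat x-x_\alpha=p^s u$, it has exactly $p^s$ solutions in $\mathbb{R}^*$, namely $\beta\equiv w u^{-1}\pmod{p^{\tau-s}}$. Because $0<|\hat x-x_\alpha|<p^\tau$, the valuation in the ring equals the integer valuation. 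Among the integers $0,\dots,2^m-1$, at most $\lceil 2^m/p^s\rceil-1=\lfloor (2^m-1)/p^s\rfloor$ values $\hat x\neq x_\alpha$ satisfy $p^s\mid \hat x-x_\alpha$. Summing, the conditional success probability is at most $p^s\lfloor (2^m-1)/p^s\rfloor/|\mathbb{R}^*|\le (2^m-1)/|\mathbb{R}^*|$, uniformly in $s$. Averaging over views gives the theorem with no extra hypothesis on $p$.
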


\begin{proof}
    This theorem follows directly from Lemma \ref{Lemma:verifiability2}.
\end{proof}

\end{appendices}







\bibliography{manuscript}

\end{document}